\newcommand{\cD}{{\mathcal D}}
\newcommand{\cE}{{\mathcal E}}
\newcommand{\cN}{{\mathcal N}}
\newcommand{\cS}{{\mathcal S}} 
\newcommand{\cT}{{\mathcal T}} 
\newcommand{\ch}{{\rm ch}}
\newtheorem{theorem}{Theorem}
\newtheorem{lemma}{Lemma}
\newtheorem{observation}{Observation}
\newtheorem{corollary}{Corollary}
\journal{XXX}
\begin{document}

\begin{frontmatter}

\title{On the Quirks of Maximum Parsimony and Likelihood on Phylogenetic Networks}

\author[stats]{Christopher Bryant}\ead{chris.bryant@stats.govt.nz}\author[greifswald] {Mareike Fischer}\ead{email@mareikefischer.de} \author[UoA] {Simone Linz}\ead{s.linz@auckland.ac.nz}  \author[UC] {Charles Semple}\ead{charles.semple@canterbury.ac.nz}

\address[stats]{Statistics New Zealand, Wellington, New Zealand.}
\address[greifswald]{Department for Mathematics and Computer Science, Ernst Moritz Arndt University Greifswald, Germany.}
\address[UoA]{Department of Computer Science, University of Auckland, New Zealand.}
\address[UC]{School of Mathematics and Statistics, University of Canterbury, Christchurch, New Zealand.}

\begin{abstract}
Maximum parsimony is one of the most frequently-discussed tree reconstruction methods in phylogenetic estimation. However, in recent years it has become more and more apparent that phylogenetic trees are often not sufficient to describe evolution accurately. For instance, processes like hybridization or lateral gene transfer that are commonplace in many groups of organisms and result in mosaic patterns of relationships cannot be represented by a single phylogenetic tree. This is why phylogenetic networks, which can display such events, are becoming of more and more interest in phylogenetic research. It is therefore necessary to extend concepts like maximum parsimony from  phylogenetic trees to networks. Several suggestions for possible extensions can be found in recent literature, for instance the softwired and the hardwired parsimony concepts. In this paper, we analyze the so-called big parsimony problem under these two concepts, i.e. we investigate maximum parsimonious networks and analyze their properties. In particular, we show that finding a softwired maximum parsimony network is possible in polynomial time. We also show that the set of maximum parsimony networks for the hardwired definition always contains at least one phylogenetic tree. Lastly, we investigate some parallels of parsimony to different likelihood concepts on phylogenetic networks.

\end{abstract}

\begin{keyword}
hardwired\sep likelihood\sep  parsimony\sep phylogenetic networks\sep softwired

\end{keyword}

\end{frontmatter}

\section{Introduction}
Maximum parsimony (MP) is a popular tool to reconstruct phylogenetic trees from a sequence of morphological or molecular characters. Since there is currently an increasing interest in representing evolution as an intertwined network~\citep{bapteste13,morrison11} that accounts for speciation as well as reticulation events such as lateral gene transfer or hybridization, it is not surprising that consideration is given to extending parsimony to phylogenetic networks. Similar to parsimony on phylogenetic trees (reviewed in \citet{felsenstein04}), one distinguishes the small and big parsimony problem. In terms of phylogenetic networks, the small parsimony problem asks for the parsimony score of a sequence of characters on a (given) phylogenetic network, while the big parsimony problem asks to find a phylogenetic network for a sequence of characters that minimizes the score amongst all phylogenetic networks. It is the latter problem that evolutionary biologists usually want to solve for a given data set, and it is this problem that is the focus of this paper.

Recently, two different approaches for parsimony on phylogenetic networks have been proposed, referred to as  {\it hardwired} and {\it softwired} parsimony. The hardwired framework, introduced by \citet{kannan12}, calculates the parsimony score of a phylogenetic network by considering character-state transitions along every edge of the network. A slightly different approach was taken by \citet{nakhleh05}, who defined the softwired parsimony score of a phylogenetic network to be the smallest (ordinary) parsimony score of any phylogenetic tree that is displayed by the network under consideration. Although one can compute the hardwired parsimony score of a set of binary characters on a phylogenetic network in polynomial time \citep{semple03}, solving the small parsimony problem is in general NP-hard under both notions \citep{fischer,jin09,nguyen07}.
In contrast, the small parsimony problem on phylogenetic trees is solvable in polynomial time by applying Fitch-Hartigan's~\citep{fitch71,hartigan73} or Sankoff's \citep{sankoff75} algorithm.

Given that it is in general computationally expensive to solve the small parsimony problem on networks, effort has been put into the development of heuristics \citep{kannan12}, and algorithms that are exact and have a reasonable running time despite the complexity of the underlying problem \citep{fischer,kannan14}. However, in finding ever quicker and more advanced algorithms to solve the small parsimony problem, an analysis of MP networks under the hardwired or softwired notion, and their biological relevance has fallen short. The only exceptions are two practical studies \citep{jin06,jin07} that aim at the reconstruction of a particular type of a softwired MP network for which the input does not only consist of a  sequence of characters, but also of a given phylogenetic tree $\cT$ (e.g. a species tree) and a positive integer $k$. More precisely, this version of softwired parsimony adds $k$ reticulation edges to $\cT$ such that the softwired parsimony score of the resulting phylogenetic network is minimized over all possible solutions.

In this paper, we present the first analysis of MP networks and reveal fundamental properties of such networks that are simultaneously surprising and undesirable. For example, we show that an MP network under the hardwired definition tends to have a small number of reticulations, while an MP network under the softwired definition tends to have many reticulations. Even stronger, we show that, for any sequence of characters, there always exists a phylogenetic tree that is an MP network under the hardwired definition. While some of our findings have independently been stated in \citet{wheeler15}, we remark that the author does not give any formal proofs. In conclusion, the properties we find question the biological meaningfulness of MP networks and emphasize a fundamental difference between the hardwired and softwired parsimony framework on phylogenetic networks. We then shift towards maximum likelihood concepts on phylogenetic networks and analyze whether or not the Tuffley-Steel equivalence result for phylogenetic trees also holds for networks. It is well known that under a simple substitution model, parsimony and likelihood on phylogenetic trees are equivalent \citep{TS}. However, as we shall show, parsimony on networks is not equivalent to one of the most frequently-used likelihood concepts on networks. Nevertheless, the equivalence can be recovered using functions that resemble likelihoods, but are not true likelihoods in a probability theoretical sense. We call these functions pseudo-likelihoods. In this sense, the equivalence of the different parsimony concepts to pseudo-likelihoods rather than likelihoods can be viewed as another drawback of the existing notions of parsimony.

The remainder of the paper is organized as follows. The next section contains notation and terminology that is used throughout the paper. We then analyze properties of MP networks under the hardwired and softwired definition in Section~\ref{sec:parsimony}. Additionally, this section also considers the computational complexity of the big parsimony problem under both definitions. Then, in Section~\ref{sec:MPML}, we re-visit the Tuffley-Steel equivalence result for parsimony and likelihood, and investigate in how far it can be extended from trees to networks. We end the paper with a brief conclusion in Section~\ref{sec:conclu}. 

Lastly, it is worth noting that our results are presented as general as possible. For example, we do not bound the number of character states of any character that is considered in this paper. Furthermore, the only restriction in the definition of a phylogenetic network (see next section for details) is that the out-degree of a reticulation is exactly one. As a reticulation and speciation event are unlikely to happen simultaneously, this restriction is biologically sensible and, in fact, only needed to establish Theorem~\ref{t:newDef}.

\section{Preliminaries}\label{sec:prelim}
\subsection{Trees and networks} 
A {\it rooted phylogenetic tree on $X$} is a rooted tree with no degree-two vertices (except possibly the root which has degree at least two) and whose leaf set is  $X$. Furthermore, a rooted phylogenetic tree on $X$ is {\it binary} if each internal vertex, except for the root, has degree three. A natural extension of a rooted phylogenetic tree on $X$ that allows for vertices whose in-degree is greater than one is a {\it rooted phylogenetic network $\cN$ on $X$} which is a rooted acyclic digraph that satisfies the following three properties:
\begin{itemize}
\item[(i)] $X$ is the set of vertices of in-degree one and out-degree zero,
\item[(ii)] the out-degree of the root is at least two, and 
\item[(iii)] every other vertex has either in-degree one and out-degree at least two, or in-degree at least two and out-degree one.
\end{itemize}
Similar to rooted phylogenetic trees, we call $X$ the {\it leaf set} of $\cN$. Furthermore, each vertex of $\cN$ whose in-degree is at least two is called a {\it reticulation} and represents a species whose genome is a mosaic of at least two distinct parental genomes, while each edge directed into a reticulation is called a {\it reticulation edge}. To illustrate, a rooted phylogenetic network on $X=\{1,2,3,4\}$ and with one reticulation is shown on the left-hand side of Figure~\ref{fig:prelim}. Moreover, for two vertices $u$ and $v$ in $\cN$, we say that $u$ is a {\it parent} of $v$ or, equivalently, $v$ is a {\it child} of $u$ if $(u,v)$ is an edge in $\cN$. Lastly, note that a rooted phylogenetic tree on $X$ is a rooted phylogenetic network on $X$ with no reticulation.

\begin{figure}[t]
\center
\scalebox{1.3}{\input{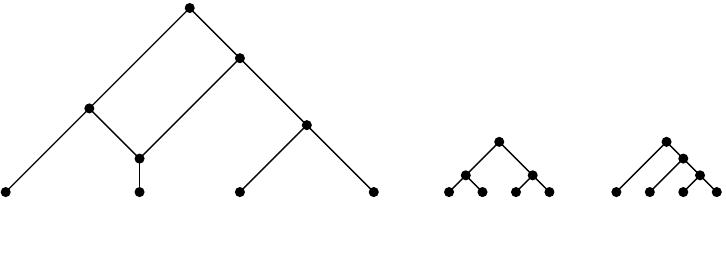_t}}
\caption{Left: A rooted phylogenetic network $\cN$ on leaf set $X=\{1,2,3,4\}$. Right: The two rooted phylogenetic trees $\cT_1$ and $\cT_2$ on $X$ displayed by $\cN$.}
\label{fig:prelim}
\end{figure}

Let $\cN$ be a rooted phylogenetic network on $X$ and let $\cT$ be a rooted phylogenetic tree on $X$. 
We say that $\cT$ is {\it displayed} by $\cN$ if, up to contracting vertices with in-degree one and out-degree one, $\cT$ can be obtained from $\cN$ by deleting edges and non-root vertices, in which case the resulting acyclic digraph is an {\it embedding} of $\cT$ in $\cN$. Intuitively, if $\cT$ is displayed by $\cN$, then all ancestral information inferred by $\cT$ is also inferred by $\cN$. The two rooted phylogenetic trees $\cT_1$ and $\cT_2$ that are displayed by the network shown on the left-hand side of Figure~\ref{fig:prelim} are presented on the right-hand side of the same figure. Lastly, we use $\cD(\cN)$ to denote the set of all rooted phylogenetic trees that are displayed by $\cN$.

\subsection{Characters} 
Let $G$ be an acyclic digraph. We denote the vertex set of $G$ by $V(G)$ and the edge set of $G$ by $E(G)$. Furthermore, we call $X$  a {\it distinguished set} of $G$ if it is a  subset of the vertices of $G$ whose out-degree is zero such that, if $G$ is a rooted phylogenetic network $\cN$ (resp. a rooted phylogenetic tree $\cT$), then $X$ is precisely the leaf set of $\cN$ (resp. $\cT$). A {\it character on $X$} is a function $\chi$ from $X$ into a set $C$ of character states. 
%If $X$ plays no particular role, we simply call $\chi$ a {\it character}.

Let $G$ be an acyclic digraph with distinguished set $X$ and let $\chi$ be a character on $X$. An {\it extension} of $\chi$ to $V(G)$ is a function $\bar\chi$ from $V(G)$ to $C$ such that $\bar\chi(\ell)=\chi(\ell)$ for each element $\ell\in X$. For an extension $\bar\chi$ of $\chi$ to $V(G)$, we set $$\ch(\bar\chi,G)=|\{(u,v)\in E(G):\bar\chi(u)\ne\bar\chi(v)\}|,$$ and refer to it as the {\it changing number} of $\bar\chi$. In other words, the changing number of $\bar\chi$ is the number of edges in $G$ whose two endpoints are assigned to different character states. Two characters $\chi_1$ and $\chi_2$ on $X$ are shown on the left-hand side of Figure~\ref{fig:hardwired} while possible extensions $\bar\chi_1$ and $\bar\chi_2$ of $\chi_1$ and $\chi_2$, respectively, to the vertex set of the underlying rooted phylogenetic network $\cN$ on four leaves are shown in the middle and on the right-hand side of the same figure. Note that $\ch(\bar\chi_1,\cN)=\ch(\bar\chi_2,\cN)=2$.
If $G$ is a rooted phylogenetic tree on $X$, we say that  $\chi$ is {\it homoplasy-free} on $G$ if there exists an extension $\bar\chi$ of $\chi$ to $V(G)$ such that, for each character state $c_i\in C$, the subgraph of $G$ induced by $\{v\in V(G): \bar{\chi}(v)=c_i\}$, the subset of vertices assigned to the same character state, is connected. Equivalently, $\chi$ is said to be homoplasy-free on $G$ if there exists an extension $\bar\chi$ of $\chi$ to $V(G)$ such that $\ch(\bar\chi,G)=|C|-1$. Biologically speaking, if $\chi$ is homoplasy-free on a rooted phylogenetic tree, then $\chi$ can be explained without any reverse or convergent character-state transitions. Note that, for each character $\chi$ on $X$, there always exists a rooted phylogenetic tree $\cT$ such that $\chi$ is homoplasy-free on $\cT$, in which case $\cT$ is said to be a {\it perfect phylogeny} for $\chi$. 

Now, let $\cT$ be a perfect phylogeny for a character $\chi$ on $X$. It is easily checked that any rooted binary phylogenetic tree $\cT'$ on $X$ with the property that $\cT$ can be obtained from $\cT'$ by contracting a possibly empty set of edges is also a perfect phylogeny for $\chi$. We call $\cT'$ a {\it binary refinement} of $\cT$. The next observation is an immediate consequence of the fact that each phylogenetic tree has a binary refinement.

\begin{observation}\label{ob-pp}
Let $\chi$ be a character on $X$. There exists a rooted binary phylogenetic tree $\cT$ on $X$ that is a perfect phylogeny for $\chi$.
\end{observation}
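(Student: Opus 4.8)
The plan is to combine the two facts recalled immediately before the statement: that every character on $X$ admits a perfect phylogeny, and that any binary refinement of a perfect phylogeny is again a perfect phylogeny. First I would invoke the former to fix a rooted phylogenetic tree $\cT$ on $X$ that is a perfect phylogeny for $\chi$. It then suffices to produce a rooted \emph{binary} phylogenetic tree $\cT'$ on $X$ from which $\cT$ can be recovered by contracting a (possibly empty) set of edges, since such a $\cT'$ is a binary refinement of $\cT$ and hence, by the stated property, a perfect phylogeny for $\chi$ as well. This reduces the observation to the purely combinatorial claim that every rooted phylogenetic tree has a binary refinement.

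To establish that claim, I would resolve the high-degree vertices of $\cT$ one at a time. If $\cT$ is already binary we are done, so suppose some vertex $v$ has out-degree $d \geq 3$ with children $w_1, \dots, w_d$. I would delete the edges $(v, w_1), \dots, (v, w_d)$ and insert a caterpillar: introduce new vertices so as to form a directed path $v = v_1, v_2, \dots, v_{d-1}$, and attach the children so that each $v_j$ has out-degree two, namely $v_1$ receives $w_1$ and $v_2$, each intermediate $v_j$ receives $w_j$ and $v_{j+1}$, and $v_{d-1}$ receives $w_{d-1}$ and $w_d$. Each newly created internal vertex then has in-degree one and out-degree two, and contracting the path edges $(v_1,v_2), \dots, (v_{d-2}, v_{d-1})$ restores $v$ together with its original $d$ children. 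Iterating over all vertices of out-degree at least three (and handling the root analogously, noting that its out-degree is at least two) yields a binary tree $\cT'$ on the same leaf set $X$ from which $\cT$ is recovered by contracting exactly the edges introduced during the resolutions.

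I do not anticipate a genuine obstacle here, as the argument is routine once the reduction is made. The only point requiring mild care is the bookkeeping: verifying that the caterpillar replacement leaves the leaf set unchanged, produces no degree-two vertices, and that contracting precisely the inserted edges returns $\cT$ rather than some other tree. With $\cT'$ in hand, the stated binary-refinement property immediately gives that $\chi$ is homoplasy-free on $\cT'$, so $\cT'$ is the desired rooted binary perfect phylogeny for $\chi$, completing the proof.
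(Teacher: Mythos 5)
Your proposal is correct and follows exactly the paper's route: the paper derives the observation as an immediate consequence of the existence of a perfect phylogeny for any character together with the fact that every rooted phylogenetic tree has a binary refinement, which preserves the perfect-phylogeny property. Your only addition is an explicit caterpillar construction for the binary refinement, a routine detail the paper leaves implicit.
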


\begin{figure}[t]
\center
\scalebox{1.3}{\input{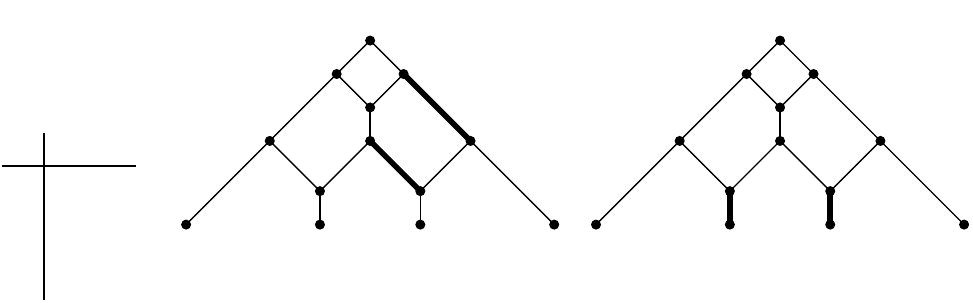_t}}
\caption{Left: Two characters $\chi_1$ and $\chi_2$ on $X$, each with two character states $\alpha$ and $\beta$. Middle and right:  An extension $\bar\chi_1$ (resp. $\bar\chi_2$) of $\chi_1$ (resp. $\chi_2$) to the vertex set of the underlying rooted phylogenetic network $\cN$ on four leaves. Indicated by the thicker edges, note that both extensions yield two edges in $\cN$ whose two endpoints are assigned to two different character states.}
\label{fig:hardwired}
\end{figure}

%We end this section with a remark that is implicitly used in the remainder of the paper.

%\noindent{\bf Remark.} In the following, we sometimes argue that a rooted phylogenetic tree is a most parsimonious tree for a collection $\cC$  of characters by exclusively considering other rooted phylogenetic trees although the problem of finding a most parsimonious tree for $\cC$ was originally stated and analyzed for unrooted phylogenetic trees. However, this is not a problem because each unrooted most parsimonious tree $\cT$ for a $\cC$ can be transformed into a rooted most parsimonious tree for $\cC$ by simply subdividing an edge with a new vertex $\rho$ and regarding each edge in the resulting tree as being directed away from $\rho$. 

%\newpage

\section{Parsimony on Networks}\label{sec:parsimony}
In this section, we review the different notions of parsimony on networks. In particular, we describe the {\it hardwired} and {\it softwired} notion that have been introduced by \citet{kannan12} and \citet{nakhleh05}, respectively. For the softwired notion, we describe three equivalent definitions, one of which is new to this paper. Moreover, we analyze the big parsimony problem on networks, and present new and curious properties of MP networks under both notions that  challenge the biological relevance of such networks.

\subsection{Hardwired Parsimony}\label{sec:hard}
The hardwired parsimony score of a character $\chi$ on an acyclic digraph $G$ extends the definition of the parsimony score of $\chi$ on a rooted phylogenetic tree in, possibly, the most natural way. Intuitively, the hardwired parsimony score of $\chi$ on $G$ equates to the smallest number of character-state transitions over all edges of $G$ that is required to explain $\chi$ on $G$. 

Formally, let $S=(\chi_1,\chi_2,\ldots,\chi_k)$ be a sequence of characters on $X$, and let  $G$ be an acyclic digraph with a distinguished set $X$. Then, the {\it hardwired parsimony score} of $S$ on $G$ is defined as $$PS_{hard}(S,G)=\sum_{i=1}^k\min_{\bar\chi_i}(\ch(\bar\chi_i,G)),$$ where, for each character $\chi_i$, the minimum is taken over all extensions of $\chi_i$ to $V(G)$. We note that the hardwired parsimony score of $S$ on $G$ coincides with that of the (ordinary) parsimony score \citep{fitch71} if $G$ is a (rooted) phylogenetic tree $\cT$ on $X$ and denote the latter score by $PS(S,\cT)$. Moreover, since a rooted phylogenetic network $\cN$ is a special type of acyclic digraph, the definition of the hardwired parsimony score of $S$ on $G$ naturally carries over to the hardwired parsimony score of $S$ on $\cN$. 

In practice, we are usually not given a rooted phylogenetic network. We are simply given a sequence $S$ of characters on $X$ and the aim is to find a rooted phylogenetic network $\cN$ on $X$ that has the smallest hardwired parsimony score for $S$ among all such networks, i.e. $PS_{hard}(S,\cN)\leq PS_{hard}(S,\cN')$ for each rooted phylogenetic network $\cN'$ on $X$. We refer to $\cN$ as a {\it hardwired MP network} and denote the corresponding parsimony score by $PS_{hard}(S)$. For example, Figure~\ref{fig:hardwired} shows a rooted phylogenetic network $\cN$ whose hardwired parsimony score is $PS_{hard}((\chi_1,\chi_2),\cN)=4$, where $\chi_1$ and $\chi_2$ are the two characters shown on the left-hand side of the same figure.

The first main result, Theorem~\ref{t:tree}, describes the first of our curious properties for MP networks. Let $S$ be a sequence of characters on $X$, and let $G$ and $G'$ be two acyclic digraphs with distinguished set $X$. If $G'$ can be obtained from $G$ by deleting an edge, deleting a vertex, or contracting a vertex with in-degree one and out-degree one, then it is easily checked that the hardwired parsimony score for $S$ on $G'$ is at most the hardwired parsimony score for $S$ on $G$. We summarize this result in the following observation, for which an example is shown in Figure~\ref{fig:hardwired_seq}.
\begin{observation}\label{ob-hard}
For an acyclic digraph $G$ with distinguished set $X$, deleting an edge or vertex in $G$ that is not in $X$ without disconnecting $G$, or contracting a vertex of $G$ with in-degree one and out-degree one never increases the hardwired parsimony score.
\end{observation}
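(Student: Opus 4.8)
The plan is to exploit the fact that $PS_{hard}(S,G)$ is a sum of per-character minima that are taken independently, so it suffices to fix a single character $\chi$ and show that $\min_{\bar\chi}\ch(\bar\chi,G)$ does not increase under each of the three operations. For this I would start from an extension $\bar\chi$ of $\chi$ to $V(G)$ attaining $\min_{\bar\chi}\ch(\bar\chi,G)$ and, for each operation producing $G'$ from $G$, exhibit an extension $\bar\chi'$ of $\chi$ to $V(G')$ with $\ch(\bar\chi',G')\le\ch(\bar\chi,G)$. Since $\min_{\bar\chi'}\ch(\bar\chi',G')\le\ch(\bar\chi',G')$, summing the resulting inequalities over the characters in $S$ then yields $PS_{hard}(S,G')\le PS_{hard}(S,G)$. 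A preliminary check is that $X$ remains a valid distinguished set of $G'$ in each case, which holds because the operations neither remove a vertex of $X$ nor create an out-edge at one.

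The two deletion cases are immediate. If $G'$ is obtained by deleting an edge, then $V(G')=V(G)$ and I would simply take $\bar\chi'=\bar\chi$; since $E(G')\subseteq E(G)$, the set of edges of $G'$ whose endpoints receive distinct states under $\bar\chi'$ is a subset of the corresponding set for $G$, so $\ch(\bar\chi',G')\le\ch(\bar\chi,G)$. If $G'$ is obtained by deleting a vertex $w\notin X$ together with its incident edges, I would take $\bar\chi'=\bar\chi|_{V(G')}$, which is still an extension of $\chi$ because $X\subseteq V(G')$; again every edge of $G'$ is an edge of $G$, so the same subset argument applies. Note that the hypothesis that $G$ is not disconnected plays no role in these inequalities; it only ensures that $G'$ remains a single acyclic digraph with distinguished set $X$.

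The only case requiring a genuine argument is the contraction of a vertex $w$ with in-degree one and out-degree one, and this is where I expect the (mild) crux to lie. Let $p$ be the unique parent and $c$ the unique child of $w$, so that contracting $w$ replaces the edges $(p,w)$ and $(w,c)$ by the single edge $(p,c)$ while leaving all other edges intact; since $w$ has out-degree one it is not in $X$, so $\bar\chi'=\bar\chi|_{V(G')}$ is again a valid extension. Every edge of $G'$ other than $(p,c)$ is an edge of $G$ contributing identically to both changing numbers, so it remains to compare the contribution of $(p,c)$ in $G'$ with that of $(p,w)$ and $(w,c)$ in $G$. If $\bar\chi(p)=\bar\chi(c)$, then $(p,c)$ contributes $0$ and there is nothing to show; if $\bar\chi(p)\ne\bar\chi(c)$, then $\bar\chi(w)$ cannot equal both $\bar\chi(p)$ and $\bar\chi(c)$, so at least one of $(p,w),(w,c)$ already contributed $1$ to $\ch(\bar\chi,G)$, matching the single unit contributed by $(p,c)$. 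In either case $\ch(\bar\chi',G')\le\ch(\bar\chi,G)$, which completes the argument.
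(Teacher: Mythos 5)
Your proof is correct, and it is exactly the routine verification the paper has in mind: the paper states this observation without a formal proof (``it is easily checked''), and your argument --- restricting a minimizing extension to $V(G')$, noting $E(G')\subseteq E(G)$ for the two deletion cases, and the case analysis showing the contracted edge $(p,c)$ costs at most what the path $(p,w),(w,c)$ cost --- is the intended easy check, including the correct observations that $w\notin X$ since its out-degree is one and that the per-character minima can be handled independently. Nothing is missing.
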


\begin{figure}[t]
\center
\scalebox{1.1}{\input{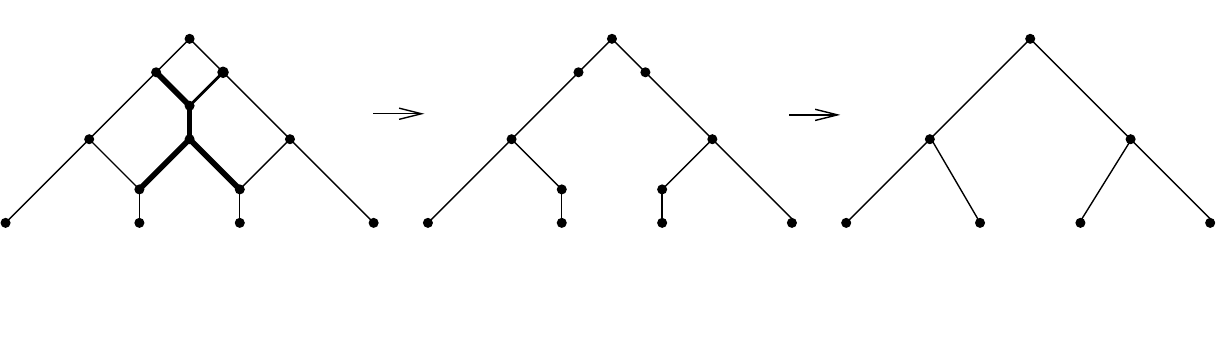_t}}
\caption{An example to illustrate Observation~\ref{ob-hard}, where $G_2$ is obtained from $G_1$ by a sequence of edge (indicated by the thicker edges in $G_1$) and vertex deletions, and $G_3$ is obtained from $G_2$ by contracting all vertices with in-degree one and out-degree one. Note that $PS_{hard}(\chi_1,G_1)=2$ and $PS_{hard}(\chi_1,G_2)=PS_{hard}(\chi_1,G_3)=1$, where $\chi_1$ is the character shown on the left-hand side of Figure~\ref{fig:hardwired}.}
\label{fig:hardwired_seq}
\end{figure}

The next theorem follows by taking Observation~\ref{ob-hard} to an extreme for a rooted phylogenetic network $\cN$ on $X$, i.e.  deleting edges and vertices, and contracting vertices with in-degree one and out-degree one in $\cN$ until the resulting graph is a rooted phylogenetic tree on $X$.
%Formally, this theorem is an immediate consequence of Lemma~\ref{l:deleteEdge}, stated and proven in the appendix.
\begin{theorem}\label{t:tree}
Let $S$ be a sequence of characters on $X$. There is always a rooted phylogenetic tree on $X$ that is a hardwired {\rm MP} network for $S$.
\end{theorem}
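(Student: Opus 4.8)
The plan is to start from an arbitrary hardwired MP network and simplify it step by step, invoking Observation~\ref{ob-hard} at each step so that the score never increases, until nothing but a rooted phylogenetic tree on $X$ is left. First I would record that a hardwired MP network exists at all: as $\cN'$ ranges over all rooted phylogenetic networks on $X$, the values $PS_{hard}(S,\cN')$ form a nonempty set of non-negative integers, which therefore has a least element $PS_{hard}(S)$ attained by some network $\cN$. Since a rooted phylogenetic tree is a rooted phylogenetic network (one with no reticulation), we automatically have $PS_{hard}(S)\le PS(S,\cT')$ for every rooted phylogenetic tree $\cT'$ on $X$, so it suffices to produce one tree attaining equality.

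The core of the argument is to pass from $\cN$ to a displayed tree. Every rooted phylogenetic network displays at least one rooted phylogenetic tree on $X$ (resolve each reticulation by retaining exactly one of its incoming edges and then contracting), so $\cD(\cN)\ne\emptyset$; fix any $\cT\in\cD(\cN)$. By the definition of ``displayed,'' an embedding of $\cT$ in $\cN$ is obtained from $\cN$ by a sequence of edge deletions and deletions of non-root vertices, after which $\cT$ arises by contracting vertices of in-degree one and out-degree one. I would then check that every individual operation in this sequence satisfies the hypotheses of Observation~\ref{ob-hard}: no deleted vertex lies in $X$, because the leaves of $\cT$ are exactly $X$ and are all retained; and no deletion disconnects the digraph, because deleting one reticulation edge leaves the reticulation reachable through a surviving parent, while each pruned vertex that is subsequently removed has no surviving out-edge and hence no downstream vertex depending on it. Applying Observation~\ref{ob-hard} once per operation gives $PS_{hard}(S,\cT)\le PS_{hard}(S,\cN)$, which, since $\cT$ is a tree, reads $PS(S,\cT)\le PS_{hard}(S)$.

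Combining this with the reverse inequality $PS_{hard}(S)\le PS(S,\cT)$ from the first paragraph forces $PS(S,\cT)=PS_{hard}(S)$. Hence the tree $\cT$ attains the minimum hardwired parsimony score and is itself a hardwired MP network, which proves the theorem.

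The step needing the most care is the bookkeeping in the second paragraph: confirming that the reduction of $\cN$ to a displayed tree can really be realized as a sequence of operations each lying within the scope of Observation~\ref{ob-hard} (in particular that we never transiently disconnect the digraph and never delete a leaf), and that the process terminates in a genuine rooted phylogenetic tree on $X$ rather than a degenerate acyclic digraph (for instance, correctly handling a root whose out-degree drops to one). These checks are routine, but they are where a careless version of the argument could go wrong.
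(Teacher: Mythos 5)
Your proof is correct, and its overall skeleton (an MP network exists because the scores are non-negative integers; fix a displayed tree; show the score cannot increase in passing to it; conclude equality by minimality) is exactly the skeleton of the paper's argument. Where you diverge is in how the key inequality $PS_{hard}(S,\cT)\le PS_{hard}(S,\cN)$ is established. The paper proves it in one shot via Lemma~\ref{l:deleteEdge}: given any extension $\bar\chi$ of a character to $V(\cN)$, restrict it to $V(\cT)$ along an embedding; each edge of $\cT$ corresponds to a path in $\cN$, and an edge of $\cT$ with differing endpoint states forces at least one change somewhere on the corresponding path, so $\ch(\bar\chi,\cN)\ge\ch(\bar\chi_1,\cT)$. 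This restriction argument completely sidesteps the issue you (rightly) flag as the delicate point of your version, namely ordering the deletions so that Observation~\ref{ob-hard} applies at every intermediate step without transiently disconnecting the digraph or isolating a vertex --- e.g.\ deleting a dangling vertex's last incident edge before deleting the vertex itself would momentarily disconnect the graph, so one must strip pendant vertices (with their incident edge) rather than delete edges first, and bridges among the to-be-removed edges must be handled by stripping the far side bottom-up. Your route, by contrast, is the one the paper itself sketches informally just before the theorem (``taking Observation~\ref{ob-hard} to an extreme''), and it is legitimate since the Observation is available as a stated result; but note the Observation is itself only asserted (``easily checked'') in the paper, so your proof ultimately rests on the same kind of local verification that Lemma~\ref{l:deleteEdge} makes precise, while the lemma additionally earns its keep by being reused verbatim for Theorem~\ref{l:displayedTrees}, whose double-inequality argument your final paragraph essentially reproduces.
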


\noindent Theorem~\ref{t:tree} immediately follows from the next lemma which is also used in the proof of Theorem~\ref{l:displayedTrees}.

\begin{lemma}\label{l:deleteEdge}
Let $\cN$ be a rooted phylogenetic network on $X$, and let $\cT$ be a rooted phylogenetic tree on $X$ that is displayed by $\cN$. Furthermore, let $\chi$ be a character on $X$, and let $\bar{\chi}$ be an extension of $\chi$ to $V(\cN)$. Then, there exists an extension $\bar{\chi}_1$ of $\chi$ to $V(\cT)$  such that $\ch(\bar{\chi},\cN)\geq \ch(\bar{\chi}_1,\cT)$.
\end{lemma}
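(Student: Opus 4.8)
The plan is to exploit the definition of ``displayed'' directly. Since $\cT$ is displayed by $\cN$, there is an embedding $\cT'$ of $\cT$ in $\cN$, that is, an acyclic digraph obtained from $\cN$ by deleting edges and non-root vertices such that contracting the in-degree-one, out-degree-one vertices of $\cT'$ yields $\cT$. In particular $\cT'$ is a subgraph of $\cN$ with $V(\cT')\subseteq V(\cN)$ and $E(\cT')\subseteq E(\cN)$, and because leaves are never suppressed by the contraction, $X\subseteq V(\cT')$ with the leaf labels preserved. I would build $\bar\chi_1$ in two stages: first restrict $\bar\chi$ to the embedding, then follow the contractions down to $\cT$.

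First I would restrict $\bar\chi$ to $V(\cT')$ to obtain a map $\bar\chi'$, which is still an extension of $\chi$ since the leaves and their $\chi$-values survive in $\cT'$. As every edge of $\cT'$ is also an edge of $\cN$, and $\bar\chi'$ agrees with $\bar\chi$ on all endpoints, each edge of $\cT'$ whose endpoints receive different character states is also such an edge of $\cN$. Hence $\ch(\bar\chi',\cT')\le\ch(\bar\chi,\cN)$, where the possible strict inequality accounts for the edges of $\cN$ that were deleted when forming $\cT'$.

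The key step is then to pass from $\cT'$ to $\cT$ by suppressing the in-degree-one, out-degree-one vertices one at a time, and to verify that doing so never increases the changing number of the induced colouring. Suppose $v$ has unique parent $u$ and unique child $w$, so that contracting $v$ replaces the edges $(u,v)$ and $(v,w)$ by a single edge $(u,w)$, leaving every other vertex and its assigned state unchanged. Writing $\delta(a,b)=0$ if $\bar\chi'(a)=\bar\chi'(b)$ and $\delta(a,b)=1$ otherwise, the contribution of the two old edges is $\delta(u,v)+\delta(v,w)$ and that of the new edge is $\delta(u,w)$; since $\delta$ satisfies the triangle inequality $\delta(u,w)\le\delta(u,v)+\delta(v,w)$, the changing number cannot increase. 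Iterating over all suppressed vertices and letting $\bar\chi_1$ be the resulting restriction of $\bar\chi'$ to $V(\cT)$ gives $\ch(\bar\chi_1,\cT)\le\ch(\bar\chi',\cT')$.

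Chaining the two inequalities yields $\ch(\bar\chi_1,\cT)\le\ch(\bar\chi',\cT')\le\ch(\bar\chi,\cN)$, as required. I expect the only point needing genuine care to be the contraction step: one must confirm both that suppressing a degree-two vertex is exactly the operation relating $\cT'$ and $\cT$, and that the triangle-inequality bound holds edge-by-edge for a \emph{fixed} colouring rather than merely for the minimising extensions. In other words, Observation~\ref{ob-hard}, which is stated at the level of parsimony scores, is not invoked directly but re-derived at the level of a single extension, which is precisely what the lemma demands.
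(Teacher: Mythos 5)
Your proof is correct and follows essentially the same route as the paper's: both take $\bar{\chi}_1$ to be the restriction of $\bar{\chi}$ (via the embedding) to $V(\cT)$ and show that passing from $\cN$ to $\cT$ cannot create new changing edges. The only difference is bookkeeping --- the paper argues globally that each edge of $\cT$ corresponds to a path in the embedding whose endpoints, if differently coloured, force at least one changing edge of $\cN$, whereas you reach the same bound by suppressing the in-degree-one, out-degree-one vertices one at a time via the triangle inequality for the discrete metric, which incidentally makes explicit the edge-counting that the paper's summation over tree edges leaves implicit.
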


\begin{proof}
By the definition of displaying, $\cT$ can be obtained from $\cN$ by first deleting edges and vertices to get a tree $T$, and then contracting any resulting vertices with in-degree one and out-degree one. By construction,
$$V(\cT)\subseteq V(T)\subseteq V(\cN).$$
Let $f$ be the identity function from $V(\cT)$ to $V(T)$, and let $g$ be the identity function from $V(T)$ to $V(\cN)$. Now, let $\bar{\chi}_1$ be the extension of $\chi$ to $V(\cT)$ such that $\bar{\chi}_1(v)=\bar\chi(g(f(v)))$ for each vertex $v$ in $V(\cT)$. Let $e=(u, v)$ be an edge of $\cT$. Note that $e$ corresponds to a path in $T$ from $f(u)$ to $f(v)$. If $\bar{\chi}_1(u)\neq \bar{\chi}_1(v)$, then $e$ contributes one to ${\rm ch}(\bar{\chi}_1,\cT)$. Moreover, the edges on the path $f(u)=w_1,w_2,\ldots,w_n=f(v)$ in $T$, and therefore the edges on the path $g(f(u))=g(w_1),g(w_2),\ldots,g(w_n)=g(f(v))$ in $\cN$, collectively contribute at least one to ${\rm ch}(\bar\chi,\cN)$. Summing over all edges in $\cT$, we deduce that ${\rm ch}(\bar\chi,\cN)\ge {\rm ch}(\bar{\chi}_1,\cT)$.
\end{proof}

\noindent Following on from Theorem~\ref{t:tree}, it can be shown that each hardwired MP network $\cN$ for $S=(\chi_1,\chi_2,\ldots,\chi_k)$ that is not a phylogenetic tree has the following property. Let $v$ be a reticulation in $\cN$ and, for each $i\in\{1,2,\ldots,k\}$, let $\bar\chi_i$ be an extension of $\chi_i$ such that that $\bar\chi_1,\bar\chi_2,\ldots,\bar\chi_k$ collectively realize $PS_{hard}(S)$. Then $v$ and all its parents are assigned to the same character state. To justify this comment, assume that  there exists a character in $S$ for which $v$ and a parent, say $p$, of $v$ are assigned to two different character states. Then deleting the edge $(p,v)$ in $\cN$ decreases the hardwired parsimony score. Now, by subsequently deleting edges and vertices, and contracting vertices, we can always obtain a rooted phylogenetic tree on $X$ from $\cN$ which, by Observation~\ref{ob-hard}, has the property that its hardwired parsimony score is strictly less than that of $\cN$. This contradicts the assumption that $\cN$ is a hardwired MP network for $S$. Hence, from a biological point of view, it seems to be sensible to argue that $v$ and all of its parental species have the same genetic makeup;
%the hybrid species represented by $v$ has the same genetic makeup as one of its parent species; 
thereby indicating that the associated reticulation event is possibly redundant.

Referring back to Theorem~\ref{t:tree}, it is not too difficult to see that each rooted phylogenetic tree that is a hardwired MP network $\cN$ for $S$ is, in fact, an MP tree for $S$ since, otherwise, $\cN$ is not optimal. Moreover, by slightly strengthening this fact, the next theorem uncovers an interesting property of all rooted phylogenetic trees that are displayed by a hardwired MP network for $S$.
\begin{theorem}\label{l:displayedTrees}
Let $S$ be a sequence of characters on $X$, and let $\cN$ be a hardwired {\rm MP} network for $S$. Each rooted phylogenetic tree on $X$ that is displayed by $\cN$ is an {\rm MP} tree for $S$.
\end{theorem}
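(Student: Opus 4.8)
The plan is to sandwich the parsimony score of any displayed tree between $PS_{hard}(S,\cN)$ on both sides, using Lemma~\ref{l:deleteEdge} for one inequality and the optimality of $\cN$ as an MP network for the other.

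First I would fix a rooted phylogenetic tree $\cT\in\cD(\cN)$ and bound its parsimony score from above by $PS_{hard}(S,\cN)$. For each character $\chi_i$ in $S=(\chi_1,\chi_2,\ldots,\chi_k)$, choose an extension $\bar\chi_i$ of $\chi_i$ to $V(\cN)$ that attains the minimum, so that $\ch(\bar\chi_i,\cN)=PS_{hard}(\chi_i,\cN)$. Since $\cT$ is displayed by $\cN$, Lemma~\ref{l:deleteEdge} supplies an extension of $\chi_i$ to $V(\cT)$ whose changing number is at most $\ch(\bar\chi_i,\cN)$. As $PS(\chi_i,\cT)$ is the minimum changing number over all extensions to $V(\cT)$, this gives $PS(\chi_i,\cT)\le PS_{hard}(\chi_i,\cN)$. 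Summing over $i=1,2,\ldots,k$ then yields $PS(S,\cT)\le PS_{hard}(S,\cN)$.

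Next I would invoke the hypothesis that $\cN$ is a hardwired MP network for $S$. Because every rooted phylogenetic tree on $X$ is itself a rooted phylogenetic network on $X$, and because the hardwired score of $S$ on a tree coincides with its ordinary parsimony score $PS(S,\cdot)$, optimality of $\cN$ delivers $PS_{hard}(S,\cN)\le PS(S,\cT')$ for every rooted phylogenetic tree $\cT'$ on $X$. Chaining this with the inequality from the previous paragraph gives $PS(S,\cT)\le PS_{hard}(S,\cN)\le PS(S,\cT')$ for all such $\cT'$, which is precisely the statement that $\cT$ is an MP tree for $S$; in particular all these quantities are equal to $PS_{hard}(S)$.

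I do not expect a serious obstacle, since Lemma~\ref{l:deleteEdge} already performs the structural work of transporting an extension from $\cN$ down to a displayed tree. The one point requiring care is the aggregation step: Lemma~\ref{l:deleteEdge} is phrased for a single character and a single extension, so I must apply it to the chosen minimizing extension of each $\chi_i$ separately and then sum the per-character bounds, rather than attempting to relate the joint optima in one stroke. The other subtlety worth making explicit is the dual role played by trees, both as the objects being tested for optimality and as degenerate networks against which $\cN$ is optimal; once that distinction is spelled out, the two inequalities meet and the sandwich closes at once.
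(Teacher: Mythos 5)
Your proposal is correct and takes essentially the same route as the paper's proof: one inequality comes from applying Lemma~\ref{l:deleteEdge} characterwise to transport optimal extensions from $\cN$ to the displayed tree, the other from the optimality of $\cN$ over all networks (trees included), and the conclusion uses the coincidence of the hardwired and ordinary parsimony scores on trees. The two points you flag as needing care --- the per-character aggregation and the dual role of trees as degenerate networks --- are exactly the steps the paper's shorter proof leaves implicit.
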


\begin{proof}
Let $\cT$ be a rooted phylogenetic tree on $X$ that is displayed by $\cN$. By the optimality of $\cN$, we have $PS_{hard}(S,\cN)\leq PS_{hard}(S,\cT)$. Furthermore, applying Lemma~\ref{l:deleteEdge} to each character in $S$, we also have $PS_{hard}(S,\cN)\geq PS_{hard}(S,\cT)$. Hence,
$$PS_{hard}(S)=PS_{hard}(S,\cN)=PS_{hard}(S,\cT),$$
and so $\cT$ is a hardwired MP network for $S$. Moreover, since the (ordinary) parsimony definition for rooted phylogenetic trees coincides with the hardwired definition when restricted to rooted phylogenetic trees, it follows that $\cT$ is an MP tree for $S$. This completes the proof of the theorem.
\end{proof}

%\noindent In addition to the proof of Theorem~\ref{l:displayedTrees}, we remark that, intuitively, this theorem holds by observing that, if $\cN$ displays a tree that is a rooted phylogenetic tree on $X$ and not an MP tree for $S$, then $PS_{hard}(S,\cN)$ is greater than $PS_{hard}(S,\cT^*)$, where $\cT^*$ is an MP tree for $S$. This again contradicts the optimality of $\cN$ and, hence, the theorem holds.

We end this section with a result on the computational complexity of the big parsimony problem on phylogenetic networks under the hardwired definition. Similar to the big parsimony problem on phylogenetic trees \citep{foulds82}, the next corollary states that it takes exponential time to compute a hardwired MP network for a sequence of characters.
%Again, a formal proof can be found in the appendix of the paper. 
\begin{corollary}\label{t:hard-hard}
Let $S$ be a sequence of characters on $X$. Finding a hardwired {\rm MP} network for $S$ is NP-hard.
\end{corollary}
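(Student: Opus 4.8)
The plan is to show that the hardwired big parsimony problem coincides, up to a trivial transformation of instances, with the ordinary big parsimony problem on phylogenetic trees, whose NP-hardness is established in \citep{foulds82}. First I would phrase the problem in its decision form: given $S$ and a bound $b$, decide whether some rooted phylogenetic network $\cN$ on $X$ satisfies $PS_{hard}(S,\cN)\le b$; the analogous decision problem on trees asks whether some rooted phylogenetic tree $\cT$ on $X$ satisfies $PS(S,\cT)\le b$.

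The crux is the identity $PS_{hard}(S)=PS(S)$, where $PS(S)=\min_{\cT}PS(S,\cT)$ is the ordinary MP tree score. One inequality is immediate, since every phylogenetic tree is a network and $PS_{hard}(S,\cT)=PS(S,\cT)$, so $PS_{hard}(S)\le PS(S)$. The reverse inequality is exactly Theorem~\ref{t:tree}, together with the observation that a phylogenetic tree which is a hardwired MP network must itself be an MP tree; this last point is precisely the content of the argument used in the proof of Theorem~\ref{l:displayedTrees}, namely applying Lemma~\ref{l:deleteEdge} character by character to a tree $\cT$ displayed by an optimal $\cN$ yields $PS_{hard}(S,\cN)\ge PS(S,\cT)\ge PS(S)$. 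Consequently, for every instance $(S,b)$ the two decision problems return the same yes/no answer: if some tree has score at most $b$, it is itself a witnessing network; and conversely, if some network $\cN$ has score at most $b$, then by Lemma~\ref{l:deleteEdge} any tree it displays also has score at most $b$.

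I would then conclude NP-hardness by exhibiting the reduction from tree big parsimony that maps each instance $(S,b)$ to itself. This map is trivially computable in polynomial time and, by the equivalence just established, answer-preserving, so the NP-hardness of the tree problem \citep{foulds82} transfers verbatim to the network problem. For the search version of the statement I would add that a returned hardwired MP network $\cN$ can be converted into an MP tree in polynomial time: since every reticulation has out-degree one, deleting all but one reticulation edge at each reticulation and then suppressing the resulting degree-two vertices produces a displayed tree, which is an MP tree by Theorem~\ref{l:displayedTrees}.

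I do not expect a genuine obstacle here, as essentially all the work is already carried by Theorem~\ref{t:tree} and Lemma~\ref{l:deleteEdge}. The only points that require care are getting the directions of the inequalities right when establishing $PS_{hard}(S)=PS(S)$, noting that Lemma~\ref{l:deleteEdge} must be summed over the characters of $S$ for a fixed displayed tree, and being explicit that the reduction is the identity map on instances so that the cited hardness result for trees applies without any modification.
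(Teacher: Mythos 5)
Your proposal is correct and follows essentially the same route as the paper: the paper's proof is exactly your treatment of the search version, namely that a polynomial-time algorithm returning a hardwired MP network $\cN$ would let one extract a displayed tree in polynomial time, which by Theorem~\ref{l:displayedTrees} (itself resting on Lemma~\ref{l:deleteEdge}) is an MP tree, contradicting the NP-hardness result of \citet{foulds82}. Your additional decision-form packaging via the identity $PS_{hard}(S)=PS(S)$ is a harmless formalization of the same ingredients (with only the cosmetic omission that converting a switching into a displayed tree also requires pruning unlabeled out-degree-zero vertices, not just suppressing degree-two ones).
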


\begin{proof}
To prove that the result holds, assume the contrary. Then it takes time polynomial in the size of $X$ and $S$ to calculate a hardwired MP network $\cN$ for $S$. Let $\cT$ be any rooted phylogenetic tree on $X$ that is displayed by $\cN$. By Theorem~\ref{l:displayedTrees}, $\cT$ is an MP tree for $S$. Since such a tree can be constructed from $\cN$ in polynomial time, this contradicts the fact that calculating a maximum parsimony tree for $S$ is NP-hard \citep{foulds82}. Hence, calculating a hardwired MP network for $S$ is NP-hard.
\end{proof}

\noindent On the positive side, it is however worth noting that in order to find a hardwired MP network for a sequence of characters on $X$, by Theorem~\ref{t:tree}, it suffices to search through all rooted phylogenetic trees on $X$ instead of the greatly enlarged space of all rooted phylogenetic networks on $X$~\citep{mcdiarmid14}.

\subsection{Softwired Parsimony}\label{sec:soft}
While the evolution of a set of species whose past is likely to include reticulation events can often be best represented by a phylogenetic network, the evolution of a particular gene or DNA segment can generally be described without reticulation events and therefore be represented by a phylogenetic tree. Hence, it seems plausible to assume that the evolution of a character, which is often associated with a gene or a single nucleotide, can also be represented by a tree. Using this idea, the softwired parsimony score of a character $\chi$ on a rooted phylogenetic network $\cN$ 
 is defined to be the smallest number of character-state transitions that is necessary to explain $\chi$ on any tree that is displayed by $\cN$.

Formally, we have the following definition. Let $S=(\chi_1,\chi_2,\ldots,\chi_k)$ be a sequence of characters on $X$, and let $\cN$ be a rooted phylogenetic network on $X$. Then, the {\it softwired parsimony score} of $S$ on $\cN$ is defined as $$PS_{soft}(S,\cN)=\sum_{i=1}^k\min_{\cT\in\cD(\cN)}\min_{\bar\chi_i}(\ch(\bar\chi_i,\cT))=\sum_{i=1}^k\min_{\cT\in\cD(\cN)}PS(\chi_i,\cT),$$ where, for each character $\chi_i$, the first minimum is taken over all rooted phylogenetic trees $\cT$ on $X$ displayed by $\cN$ and the second minimum is taken over all extensions of $\chi_i$ to $V(\cT)$. Similar to the previous section, it is worth noting that, if $\cN$ is a rooted phylogenetic tree, then the (ordinary) parsimony score \citep{fitch71} of $S$ on $\cN$ is equal to the softwired parsimony score of $S$ on $\cN$. Lastly, we refer to $\cN$ as a {\it softwired MP network} and denote the corresponding parsimony score by $PS_{soft}(S)$ if $\cN$ has the smallest softwired parsimony score for $S$ among all such networks, i.e. $PS_{soft}(S,\cN)\leq PS_{soft}(S,\cN')$ for each rooted phylogenetic network $\cN'$ on $X$. To illustrate, Figure~\ref{fig:softwired} shows a rooted phylogenetic network $\cN'$ with  $PS_{soft}(\chi_2,\cN')=1$, where $\chi_2$ is the character that is shown on the left-hand side of Figure~\ref{fig:hardwired}. Indeed, it is easily checked that $\cN'$ is a softwired MP network for $\chi_2$.

\begin{figure}[t]
\center
\scalebox{1.3}{\input{softwired.pdf_t}}
%\scalebox{1.3}{\input{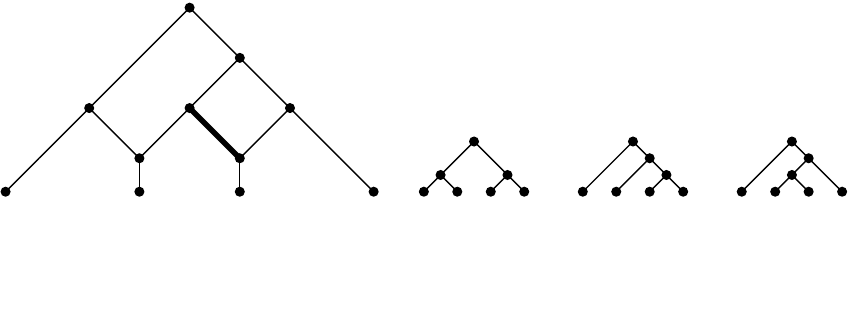}}

\caption{A rooted phylogenetic network $\cN'$ on $X=\{1,2,3,4\}$ that displays the three rooted phylogenetic trees $\cT_1$, $\cT_2$, and $\cT_3$ on $X$ that are shown on the right-hand side. With $\chi_2$ being the character shown on the left-hand side of Figure~\ref{fig:hardwired}, we see that $\chi_2$ can be explained on $\cT_3$ with just one character-state transition while two character-state transitions are necessary to explain $\chi_2$ on each of $\cT_1$ and $\cT_2$. Hence, we have  $PS_{soft}(\chi_2,\cN')=1$. Moreover, to illustrate Observation~\ref{ob-soft}, note that the rooted phylogenetic network $\cN'$ can be obtained from the network $\cN$ that is shown in Figure~\ref{fig:prelim} by adding an edge that joins two new non-leaf vertices (indicated by the thicker edge). Since $\cN$ does not display $\cT_3$, we have $PS_{soft}(\chi_2,\cN)>PS_{soft}(\chi_2,\cN')$.
}
\label{fig:softwired}
\end{figure}

We next describe our second curious property for MP networks. Let $\cN$ and $\cN'$ be two rooted phylogenetic networks on $X$ such that $\cN'$ can be obtained from $\cN$ by subdividing two edges and adding a new edge joining the two new vertices. Since the collection of rooted phylogenetic trees displayed by $\cN$ is a subset of the collection of rooted phylogenetic trees that are displayed by $\cN'$, the next observation, which is in stark contrast to Observation~\ref{ob-hard}, is an immediate consequence of the definition of the softwired parsimony score.

\begin{observation}\label{ob-soft}
Adding an edge joining two new non-leaf vertices to a rooted phylogenetic network never increases the softwired parsimony score. 
\end{observation}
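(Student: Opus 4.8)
The plan is to reduce the claim to a single set containment, $\cD(\cN)\subseteq\cD(\cN')$, and then read off the inequality directly from the definition of the softwired parsimony score. Write $S=(\chi_1,\chi_2,\ldots,\chi_k)$, let $u$ and $v$ be the two new vertices subdividing two edges of $\cN$, and let $e$ be the added edge joining $u$ and $v$. Because $\cN$ and $\cN'$ are both assumed to be rooted phylogenetic networks on $X$, I need not verify the acyclicity or degree conditions for $\cN'$; they are part of the hypothesis.

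First I would observe that $\cN$ is recovered from $\cN'$ by deleting $e$ and then suppressing $u$ and $v$: once $e$ is removed, each of $u$ and $v$ lies on one of the two formerly subdivided edges with in-degree one and out-degree one, so contracting them returns $\cN$ exactly.

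The key step is then to show $\cD(\cN)\subseteq\cD(\cN')$. Fix $\cT\in\cD(\cN)$. By definition of displaying, $\cT$ is obtained from $\cN$ by deleting edges and non-root vertices and afterwards contracting vertices of in-degree one and out-degree one. Combining this with the recovery of $\cN$ from $\cN'$ above, $\cT$ is obtained from $\cN'$ by first deleting $e$, then performing the deletions that produce $\cT$ from $\cN$, and finally suppressing all degree-two vertices. The only point requiring care is that these two prescriptions are compatible when concatenated: since suppressing a degree-two vertex commutes with deletions performed elsewhere in the digraph, all contractions may be deferred to the end, so the composite is again a sequence of deletions followed by suppressions of the permitted type. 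Hence $\cT\in\cD(\cN')$, giving $\cD(\cN)\subseteq\cD(\cN')$.

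Finally I would compare the two scores termwise. For each character $\chi_i$, the quantity $PS(\chi_i,\cT)$ is being minimized over the larger index set in the case of $\cN'$, so
$$\min_{\cT\in\cD(\cN')}PS(\chi_i,\cT)\;\le\;\min_{\cT\in\cD(\cN)}PS(\chi_i,\cT).$$
Summing over $i=1,\ldots,k$ and invoking the definition of $PS_{soft}$ yields $PS_{soft}(S,\cN')\le PS_{soft}(S,\cN)$, which is the assertion. I expect the only genuine obstacle to be the containment $\cD(\cN)\subseteq\cD(\cN')$, specifically checking that the deletion-and-suppression operations defining \emph{displayed} compose cleanly across the passage from $\cN'$ to $\cN$; the concluding monotonicity-of-the-minimum step is then a one-liner.
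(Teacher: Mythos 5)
Your proposal is correct and takes essentially the same route as the paper, which justifies the observation in one line by noting that $\cD(\cN)\subseteq\cD(\cN')$ and then invoking the definition of $PS_{soft}$ as a minimum over displayed trees. The only difference is that you spell out the containment explicitly (deleting the new edge and deferring the suppressions of the two degree-two vertices), a verification the paper treats as immediate.
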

\noindent This observation was first mentioned by \citet{nakhleh05}, who noticed that networks with a large number of reticulations tend to have a smaller parsimony score. An example to illustrate Observation~\ref{ob-soft} is shown in Figure~\ref{fig:softwired}. 

Perhaps surprisingly, in comparison to what happens under the hardwired notion, the next theorem states that solving the big parsimony problem on networks under the softwired definition for a sequence $S$ of characters on $X$ is not NP-hard. Intuitively, this can be justified by noting that it takes polynomial time to construct a rooted phylogenetic network $\cN$ that displays a perfect phylogeny  on $X$ for each character in $S$. It then follows that $\cN$ is a softwired MP network for $S$.

In the proof of the next theorem, we make use of a construction in \citet{francis15}. In particular, the authors describe a construction of a rooted phylogenetic network $\cN$ on $X$ that displays all rooted binary phylogenetic trees on $X$.  Additionally, they have shown that $\cN$ can be constructed from a rooted binary phylogenetic tree on $X$ by adding $\frac 1 2n(n-1)^2$ edges to $\cT$, where each such edge joins two vertices that subdivide edges in $\cT$ and $n=|X|$. Note that, as $\cT$ has $O(n)$ edges, $\cN$ has $O(n^3)$ edges. In what follows, we call a rooted binary phylogenetic network on $X$ a {\it universal network} on $X$ if it displays all rooted binary phylogenetic trees on $X$.

\begin{theorem}\label{soft-hard}
Let $S$ be a sequence of characters on $X$. Finding a softwired {\rm MP} network for $S$ is solvable in time polynomial in the size of $X$.
\end{theorem}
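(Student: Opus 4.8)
The plan is to produce a single network---remarkably, one that does not depend on $S$ at all---that simultaneously realises, for every character in $S$, the smallest parsimony score attainable on \emph{any} rooted phylogenetic tree, and then to observe that this network can be written down in time polynomial in $|X|$.

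First I would record a universal lower bound on the softwired score. For a character $\chi$ on $X$ with state set $C$ and any rooted phylogenetic tree $\cT$ on $X$, every extension $\bar\chi$ satisfies $\ch(\bar\chi,\cT)\ge |C|-1$: contracting every edge of $\cT$ whose two endpoints receive the same state yields a tree in which adjacent vertices carry distinct states and in which all $|C|$ states still occur, so this tree has at least $|C|-1$ edges, each arising from a changing edge of $\cT$. Hence $PS(\chi,\cT)\ge |C|-1$, and therefore, writing $C_i$ for the state set of $\chi_i$, every rooted phylogenetic network $\cN'$ on $X$ satisfies
$$PS_{soft}(S,\cN')=\sum_{i=1}^k\min_{\cT\in\cD(\cN')}PS(\chi_i,\cT)\ \ge\ \sum_{i=1}^k\bigl(|C_i|-1\bigr).$$

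Second I would exhibit a network meeting this bound. By Observation~\ref{ob-pp}, each $\chi_i$ admits a rooted binary phylogenetic tree $\cT_i$ on $X$ that is a perfect phylogeny, so $PS(\chi_i,\cT_i)=|C_i|-1$. Let $\cN$ be the universal network on $X$ of \citet{francis15}, which displays every rooted binary phylogenetic tree on $X$; in particular $\cT_i\in\cD(\cN)$ for each $i$. Combining this with the lower bound gives $\min_{\cT\in\cD(\cN)}PS(\chi_i,\cT)=|C_i|-1$, and summing over $i$ yields $PS_{soft}(S,\cN)=\sum_{i=1}^k(|C_i|-1)$. This matches the lower bound above, so no network can do better and $\cN$ is a softwired MP network for $S$.

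Finally, for the running time, I would note that $\cN$ is obtained from a rooted binary phylogenetic tree on $X$ by adding $\frac{1}{2}n(n-1)^2$ edges, where $n=|X|$, and so has $O(n^3)$ edges; it can therefore be constructed in time polynomial in $|X|$. As $\cN$ is independent of $S$, the algorithm simply returns $\cN$. The substance of the argument is carried entirely by the universality construction of \citet{francis15}; the only parts to verify directly are the elementary bound $PS(\chi,\cT)\ge|C|-1$ and the fact that a perfect phylogeny attains it, both immediate from the definitions in Section~\ref{sec:prelim}. The expected (minor) obstacle is purely bookkeeping: confirming that the cited network meets the degree constraints (i)--(iii) of a rooted phylogenetic network, and that ``displays all binary trees'' suffices here, which it does since by Observation~\ref{ob-pp} the perfect phylogenies $\cT_i$ may be taken binary.
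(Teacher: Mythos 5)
Your proposal is correct and follows essentially the same route as the paper: both construct the universal network of \citet{francis15} in polynomial time and show it attains the score $\sum_{i=1}^k(r_i-1)$ because it displays a binary perfect phylogeny for each character. The only (harmless) difference is bookkeeping: where the paper builds the star-like tree $T_i$ explicitly and cites Proposition~5.1.3 of \citet{semple03} to conclude each perfect phylogeny is an MP tree, you instead prove the lower bound $PS(\chi,\cT)\ge |C|-1$ directly via the contraction argument and invoke Observation~\ref{ob-pp}, making the optimality step self-contained.
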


\begin{proof}
Let $n=|X|$, and suppose that $S=(\chi_1,\chi_2,\ldots,\chi_k)$ is a sequence of characters on $X$. Let $\cN$ be a universal network on $X$ whose number of edges is polynomial in $n$. By the paragraph prior to this theorem, such a network exists \citep{francis15}. Hence, $\cN$ can be constructed in time polynomial in $n$.

We complete the proof by showing that $\cN$ is a softwired MP network for $S$. For each $i\in\{1,2,\ldots,k\}$, let $r_i$ denote the number of distinct character states of $\chi_i$, and let $T_i$ be the unique rooted tree with exactly $r_i+1$ internal vertices that has the following properties. The set of vertices of $T_i$ with out-degree zero is precisely $X$, the root of $T_i$ is adjacent to $r_i$ internal vertices, and two elements in $X$, say $\ell$ and $\ell'$, are adjacent to the same internal vertex of $T_i$ if and only if $\chi_i(\ell)=\chi_i(\ell')$. Now, obtain a rooted phylogenetic tree $\cT_i$ on $X$ from $T_i$ by contracting each vertex with in-degree one and out-degree one. Let $\cT_i'$ be any binary refinement of $\cT_i$. It is easily checked that $PS(\chi_i,\cT_i)=PS(\chi_i,\cT_i')=r_i-1$. In particular, $\chi_i$ is homoplasy-free on $\cT_i'$ and, hence, by Proposition~5.1.3 of~\citet{semple03}, $\cT_i'$ is an MP tree for $\chi_i$. Furthermore, by construction, $\cN$ displays $\cT_i'$. It now follows that $$PS_{soft}(S,\cN)=\sum_{i=1}^k r_i-1=PS_{soft}(S).$$ This completes the proof of the theorem.
%We next construct a rooted phylogenetic network on $X$ that displays each of $\cT_1, \cT_2, \ldots, \cT_k$. We do this in the following way.
%\begin{description}
%\item[Step (1).] Set $\cN_1$ to be $\cT_1$, and set $i=2$.
%\item[Step (2).] Obtain $\cN_{i-1}'$ from $\cN_{i-1}$ by adding a pedant edge $(v,\rho)$ to $\cN_{i-1}$, where $\rho$ is the root of $\cN_{i-1}$ and $v$ is a new vertex.
%\item[Step (3).] Obtain $\cN_i$ from $\cN_{i-1}'$  by joining the root of $\cT_i$ with $v$ in $\cN_{i-1}'$ via a new edge and, for each leaf $\ell$ in $\cT_i$, subdivide the pendant edge of $\cN_{i-1}'$ ending in $\ell$ by a new vertex $u_\ell$ and identify the vertex labeled $\ell$ in $\cT_i$ with $u_\ell$.
%\item[Step (4).] If $i<k$, increment $i$ by $1$ and go to Step (2); otherwise return $\cN_k$.
%\end{description}
%Clearly, $\cN_k$ displays each $\cT_i$. Furthermore, as each $\cT_i$ contains at most $|X|+r_i\leq 2|X|$ edges, as each iteration of Step (2) contributes one new edge, and as each iteration of Step (3) contributes $1+|X|$ new edges, $\cN_k$ has at most $$k\cdot(2|X|+2+|X|)$$ edges. Hence, the size of $\cN_k$ and the length of time to construct it is polynomial in $k=|S|$ and $|X|$. Moreover, since $\cT_i$ is a maximum parsimony tree for $\chi_i$, we have $$PS_{soft}(S,\cN_k)=\sum_{i=1}^k r_i-1=PS_{soft}(S).$$ This completes the proof of the theorem.
\end{proof}

From a practical viewpoint, the construction in the proof of Theorem~\ref{soft-hard} implies that one can construct a softwired MP network for an arbitrary sequence $S$ of characters on $X$ without looking at the data by simply constructing a universal network on $X$. 

We end this subsection with two equivalent ways of viewing the softwired notion of the parsimony score of a sequence of characters on a phylogenetic network. The first is due to~\citet{fischer}. Let $S$ be a sequence of characters on $X$. Furthermore, let $\cN$ be a rooted phylogenetic network on $X$, and let $T$ be a rooted tree with a distinguished set $X$. Note that $T$ is not necessarily a phylogenetic tree.~Then $T$ is called a {\it switching} of $\cN$ if it can be obtained from $\cN$ by deleting, for each reticulation $v$, all but one edge directed into $v$. It is easily checked that each rooted phylogenetic tree on $X$ that is displayed by $\cN$ can be obtained from a switching of $\cN$ by repeated applications of the following two operations: deleting unlabeled vertices of degree one, and contracting vertices with in-degree one and out-degree one. Conversely, each switching of $\cN$ can be transformed into a rooted phylogenetic tree on $X$ that is displayed by $\cN$ by repeated applications of the same two operations.

Now, let $\cS(\cN)$ denote the set of all switchings of a rooted phylogenetic network $\cN$. The next theorem allows us to work with $\cS(\cN)$ instead of the set of all trees that are displayed by $\cN$ to compute  $PS_{soft}(S,\cN)$.

\begin{theorem}[Lemma 4.5 of \citet{fischer}]\label{l:switch}
Let $S=(\chi_1,\chi_2,\ldots,\chi_k)$ be a sequence of characters on $X$, and let $\cN$ be a rooted phylogenetic network on $X$. Then, $$PS_{soft}(S,\cN)=\sum_{i=1}^k\min_{T\in\cS(\cN)}\min_{\bar\chi_i}(\ch(\bar\chi_i,T)),$$ where, for each character $\chi_i$, the first minimum is taken over all switchings $T$ of $\cN$ and the second minimum is taken over all extensions of $\chi_i$ to $V(T)$.
\end{theorem}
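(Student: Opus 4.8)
The plan is to prove the identity one character at a time. Since both sides are sums over $\chi_1,\chi_2,\ldots,\chi_k$, it suffices to fix a single character $\chi$ on $X$ and establish that $\min_{\cT\in\cD(\cN)}PS(\chi,\cT)=\min_{T\in\cS(\cN)}\min_{\bar\chi}\ch(\bar\chi,T)$, after which summing over the characters in $S$ gives the theorem. The bridge between the two sides is already supplied by the discussion preceding the statement: every switching $T$ of $\cN$ reduces, by repeatedly deleting unlabeled out-degree-zero vertices and contracting in-degree-one out-degree-one vertices, to a tree $\cT_T\in\cD(\cN)$, and conversely every tree in $\cD(\cN)$ arises in this way from some switching. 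Hence the entire argument reduces to controlling how these two suppression operations affect the minimum changing number of $\chi$.

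The key step I would isolate is an \emph{invariance claim}: if a rooted tree $G'$ with distinguished set $X$ is obtained from a rooted tree $G$ with distinguished set $X$ either (a) by deleting an unlabeled vertex of out-degree zero together with its incident edge, or (b) by contracting a vertex of in-degree one and out-degree one, then $\min_{\bar\chi}\ch(\bar\chi,G)=\min_{\bar\chi}\ch(\bar\chi,G')$. The crucial observation that makes this work is that in either operation the affected vertex lies outside $X$ (an $X$-vertex has out-degree zero, and the root has in-degree zero), so $\chi$ imposes no constraint there and we may assign that vertex any state we wish. I would prove the claim by a lift-and-project argument. For contraction (b), say $w$ has parent $u$ and child $v$, so that the edges $(u,w)$ and $(w,v)$ are replaced by a single edge $(u,v)$: projecting any extension on $G$ to $G'$ can only decrease the changing number, because if $\bar\chi(u)\ne\bar\chi(v)$ then at least one of $\bar\chi(u)\ne\bar\chi(w)$ and $\bar\chi(w)\ne\bar\chi(v)$ holds, so the two edges of $G$ contribute at least as much as the one edge of $G'$; conversely, lifting any extension from $G'$ to $G$ by setting $\bar\chi(w)=\bar\chi(u)$ makes $(u,w)$ contribute zero and leaves the contribution of $(w,v)$ equal to that of $(u,v)$, so the changing number is preserved. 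For deletion (a) of a pendant unlabeled vertex $z$ with parent $p$: restriction removes the edge $(p,z)$ and can only decrease the changing number, while extending any $G'$-extension by $\bar\chi(z)=\bar\chi(p)$ makes $(p,z)$ contribute zero. Both inequalities together give equality in each case.

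With the claim in hand, the conclusion is bookkeeping. Applying the invariance repeatedly along the sequence of operations that turns a switching $T$ into its phylogenetic tree $\cT_T$ yields $\min_{\bar\chi}\ch(\bar\chi,T)=\min_{\bar\chi}\ch(\bar\chi,\cT_T)=PS(\chi,\cT_T)$, the last equality being the definition of the ordinary parsimony score on a phylogenetic tree. Therefore $\min_{T\in\cS(\cN)}\min_{\bar\chi}\ch(\bar\chi,T)=\min_{T\in\cS(\cN)}PS(\chi,\cT_T)$. Because the assignment $T\mapsto\cT_T$ hits every tree of $\cD(\cN)$ (surjectivity, from the stated correspondence) and never leaves $\cD(\cN)$, this last minimum equals $\min_{\cT\in\cD(\cN)}PS(\chi,\cT)$, which is exactly the per-character contribution to $PS_{soft}(S,\cN)$. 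Summing over $\chi_1,\ldots,\chi_k$ finishes the proof.

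I expect the only genuinely delicate point to be the invariance claim, and within it the contraction case, where one must verify the triangle-inequality-type comparison for the projection direction and correctly exploit that the suppressed vertices are unlabeled so that the lifted assignments are always admissible extensions of $\chi$. Everything else—the reduction to a single character and the passage between $\cS(\cN)$ and $\cD(\cN)$—is routine given the correspondence already recorded before the theorem.
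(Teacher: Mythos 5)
Your proof is correct, but there is no in-paper proof to compare it against: Theorem~\ref{l:switch} is imported verbatim as Lemma~4.5 of \citet{fischer}, and the paper only supplies the surrounding correspondence between switchings and displayed trees, which you correctly take as given. Your route---reduce to a single character, show that the two suppression operations linking a switching $T\in\cS(\cN)$ to its displayed tree $\cT_T\in\cD(\cN)$ leave $\min_{\bar\chi}\ch(\bar\chi,\cdot)$ invariant, then use surjectivity of $T\mapsto\cT_T$ onto $\cD(\cN)$---is the natural self-contained argument, and your lift-and-project verification is sound: the suppressed vertices never lie in $X$ (an $X$-vertex has out-degree zero and is never unlabeled), so lifted assignments are admissible extensions, and the projection direction for contraction is exactly the comparison you flag as delicate. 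One small case escapes your invariance claim as literally phrased: the reduction may also have to delete the \emph{root} of an intermediate tree when its out-degree has dropped to one (this happens whenever all but one of the root's outgoing edges of $\cN$ are reticulation edges removed by the switching, possibly iteratedly along a chain). Such a vertex has total degree one but out-degree one, not zero, so it is covered by neither your case (a) nor case (b). The identical one-line argument disposes of it---restriction drops at most one edge's contribution, and lifting by assigning the old root the state of its unique child adds none---so you should simply state case (a) for unlabeled vertices of total degree one, matching the paper's phrasing of the operations (``deleting unlabeled vertices of degree one''). With that cosmetic repair the proof is complete.
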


The second equivalence is new to this paper and requires a new definition. Let $\chi$ be a character on $X$, and let $\cN$ be a rooted phylogenetic network on $X$. Furthermore, let $\bar\chi$ be an extension of $\chi$ to $V(\cN)$. For a reticulation edge $(u,v)$ of $\cN$, we say that $(u,v)$ is a {\it negligible edge under $\bar\chi$} if $\bar\chi(u)\ne\bar\chi(v)$ but there exists a parent $p$ of $v$ in $\cN$ such that $\bar\chi(p)=\bar\chi(v)$. We use ${\rm n}(\bar\chi,\cN)$ to denote the number of negligible edges under $\bar\chi$ in $\cN$. For example, in Figure~\ref{fig:hardwired}, the extension $\bar\chi_1$ of $\chi_1$ to the vertex set of the rooted phylogenetic network $\cN$ shown in the middle of the same Figure has ${\rm n}(\bar\chi_1,\cN)=1$. The next theorem 
%whose proof is part of the appendix, 
shows how a hardwired-type approach that considers the number of negligible edges can be used to compute the softwired parsimony score.

\begin{theorem}\label{t:newDef}
Let $S=(\chi_1,\chi_2,\ldots,\chi_k)$ be a sequence of characters on $X$, and let $\cN$ be a rooted phylogenetic network on $X$. Then $$PS_{soft}(S,\cN)=\sum_{i=1}^k\min_{\bar\chi_i}(\ch(\bar\chi_i,\cN)-{\rm n}(\bar\chi_i,\cN)),$$ where, for each character $\chi_i$, the minimum is taken over all extensions of $\chi_i$ to $V(\cN)$.
\end{theorem}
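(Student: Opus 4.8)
The plan is to prove the identity one character at a time and then sum, since both sides are sums over $\chi_1,\dots,\chi_k$; so I fix a single character $\chi$ on $X$ and aim to show $PS_{soft}(\chi,\cN)=\min_{\bar\chi}(\ch(\bar\chi,\cN)-{\rm n}(\bar\chi,\cN))$. The main tool will be Theorem~\ref{l:switch}, which expresses $PS_{soft}(\chi,\cN)$ as $\min_{T\in\cS(\cN)}\min_{\bar\chi}\ch(\bar\chi,T)$ over switchings rather than displayed trees. The bookkeeping device is a per-reticulation count: fixing an extension $\bar\chi$ on $V(\cN)$, write $A$ for the number of bad edges (endpoints of unequal state) among the non-reticulation edges, and for each reticulation $v$ let $m_v$ be its in-degree and $s_v$ the number of parents $p$ of $v$ with $\bar\chi(p)=\bar\chi(v)$. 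Since an edge into $v$ is negligible exactly when it is bad and $s_v\ge 1$, I obtain $\ch(\bar\chi,\cN)=A+\sum_v(m_v-s_v)$ and ${\rm n}(\bar\chi,\cN)=\sum_{v:\,s_v\ge1}(m_v-s_v)$, whence $\ch(\bar\chi,\cN)-{\rm n}(\bar\chi,\cN)=A+\sum_{v:\,s_v=0}m_v$.

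For the inequality $\ge$, I would turn each extension $\bar\chi$ into a switching $T(\bar\chi)$ by keeping, at every reticulation $v$, an in-edge from a same-state parent when $s_v\ge1$ and an arbitrary in-edge otherwise. As the non-reticulation edges are common to $\cN$ and every switching, $\ch(\bar\chi,T(\bar\chi))=A+|\{v:s_v=0\}|$, and comparing with the expression above while using $m_v\ge1$ gives $\ch(\bar\chi,\cN)-{\rm n}(\bar\chi,\cN)\ge\ch(\bar\chi,T(\bar\chi))\ge PS_{soft}(\chi,\cN)$ by Theorem~\ref{l:switch}. Minimizing over $\bar\chi$ settles this direction.

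For the reverse inequality $\le$, I would start from a switching $T^\ast$ achieving $\min_{\bar\chi}\ch(\bar\chi,T^\ast)=PS_{soft}(\chi,\cN)$ and manufacture an extension $\bar\psi$ on $V(\cN)=V(T^\ast)$ that is optimal on $T^\ast$ and assigns every reticulation the same state as the source of its retained in-edge. To do this I form the phylogenetic tree $\cT^\ast$ displayed by $T^\ast$ (suppressing in-degree-one out-degree-one vertices and deleting unlabeled degree-one vertices), take an optimal extension on $\cT^\ast$ realizing $PS(\chi,\cT^\ast)=PS_{soft}(\chi,\cN)$, and lift it back to $T^\ast$ by copying each retained vertex's state down the suppressed paths and into the deleted pendant parts. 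This makes every reticulation match its parent, so $s_v\ge1$ for all $v$ and every retained in-edge is good; the formula then collapses to $\ch(\bar\psi,\cN)-{\rm n}(\bar\psi,\cN)=A=\ch(\bar\psi,T^\ast)=PS_{soft}(\chi,\cN)$, yielding $\le$ and hence equality.

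The step I expect to be the main obstacle, and the only place the standing hypothesis is used, is justifying that this lifting can always force reticulations to match their parents at no cost. This hinges on the fact that a reticulation has out-degree exactly one: in any switching it therefore becomes a vertex of degree at most two, so it is either suppressed (degree two) or deleted (an unlabeled sink) in passing to $\cT^\ast$ and never survives as a branching vertex of the displayed tree. Consequently its state is free to be set equal to its parent's without disturbing an optimal extension of $\cT^\ast$; were reticulations allowed out-degree two or more, they could persist as internal vertices of $\cT^\ast$ and the parent-matching construction would break down.
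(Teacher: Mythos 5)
Your proposal is correct, and its first half runs along the same lines as the paper's proof: both reduce to a single character, both invoke Theorem~\ref{l:switch} to replace displayed trees by switchings, and both prove the inequality $\min_{\bar\chi}(\ch(\bar\chi,\cN)-{\rm n}(\bar\chi,\cN))\geq PS_{soft}(\chi,\cN)$ by building a switching that retains, at each reticulation with $s_v\geq 1$, an in-edge from a same-state parent (your explicit bookkeeping identity $\ch(\bar\chi,\cN)-{\rm n}(\bar\chi,\cN)=A+\sum_{v:\,s_v=0}m_v$ is not in the paper, but it makes this direction cleaner). Where you genuinely diverge is the reverse inequality. The paper stays on the optimal switching $T_2$ and iteratively repairs its optimal extension: whenever a retained reticulation edge $(u,v)$ is bad, it resets $\bar\chi(v)$ to $\bar\chi(u)$, using the out-degree-one hypothesis to argue that the in-edge becomes good while the unique out-edge worsens by at most one, so optimality forces equality; after finitely many swaps every retained reticulation edge is good. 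You instead pass to the displayed tree $\cT^\ast$, take an optimal extension there, and lift it back along suppressed paths and deleted pendant parts so that every reticulation inherits its parent's state; here the out-degree-one hypothesis enters differently, guaranteeing that reticulations have degree at most two in a switching and hence never survive as branching vertices of $\cT^\ast$, leaving their states free. Both uses of the hypothesis are essential and your diagnosis of where it is needed matches the paper's. Two small points to tighten: your claim $PS(\chi,\cT^\ast)=PS_{soft}(\chi,\cN)$ needs the one-line justification that contracting and deleting never increases the score (Observation~\ref{ob-hard}, or the argument of Lemma~\ref{l:deleteEdge}) together with $\cT^\ast\in\cD(\cN)$; and a reticulation in a switching is never itself an unlabeled sink (it keeps its out-edge), it is deleted only after its leaf-free descendant subtree has been removed --- a harmless imprecision, since your conclusion that it never persists into $\cT^\ast$ stands. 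The paper's swap argument is more local and avoids the tree-to-switching lifting machinery; your route buys a reusable structural fact (optimal extensions on switchings can be taken constant along suppressed paths) at the cost of slightly more bookkeeping.
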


\begin{proof}
To establish the theorem, it suffices to show that the result holds when $S$  consists of a single character $\chi$, that is $$PS_{soft}(\chi,\cN)=\min_{\bar\chi}(\ch(\bar\chi,\cN)-{\rm n}(\bar\chi,\cN)).$$ Throughout the proof, we make use of Theorem~\ref{l:switch} and consider the set of all switchings of $\cN$ to compute $PS_{soft}(\chi,\cN)$.
Let $\bar{\chi}_1$ be an extension of $\chi$ to $V(\cN)$ such that $\ch(\bar{\chi}_1,\cN)-{\rm n}(\bar{\chi}_1,\cN)=\min\limits_{\bar\chi}(\ch(\bar\chi,\cN)-{\rm n}(\bar\chi,\cN))$. Furthermore, let $T_1$ be a switching of $\cN$ such that for each reticulation $v$ of $\cN$ that has a parent $p_v$ with $\bar{\chi}_1(v)=\bar{\chi}_1(p_v)$, the edge $(p_v,v)$ is an edge of $T_1$. Since $V(T_1)=V(\cN)$, it follows that, by taking the identity function from $V(\cN)\rightarrow V(T_1)$, we can view $\bar{\chi}_1$ as an extension of $\chi$ to $V(T_1)$. Hence,
\begin{eqnarray}\label{eq:one}
\min_{\bar\chi}(\ch(\bar\chi,\cN)-{\rm n}(\bar\chi,\cN))& =&\ch(\bar{\chi}_1,\cN)-{\rm n}(\bar{\chi}_1,\cN)\nonumber\\
&\geq &\ch(\bar{\chi}_1,T_1)\geq PS_{soft}(\chi,\cN),
\end{eqnarray}
where the second inequality follows from Theorem~\ref{l:switch}.

Now, by Theorem~\ref{l:switch}, there is a switching $T_2$ of $\cN$ and an extension $\bar{\chi}_2$ of $\chi$ to $V(T_2)$ such that $\ch(\bar{\chi}_2,T_2)=PS_{soft}(\chi,\cN)$. We next show that $\bar{\chi}_2$ can always be chosen so that, for each edge $(u,v)$ in $T_2$ that is a reticulation edge in $\cN$, we have $\bar{\chi}_2(u)=\bar{\chi}_2(v)$. 

Let $(u,v)$ be an edge in $T_2$ that is a reticulation edge in $\cN$ and whose two endpoints are assigned to two different character states, i.e. $\bar{\chi}_2(u)\ne\bar{\chi}_2(v)$. Furthermore, let $\bar{\chi}_3$ be the extension of $\chi$ to $V(T_2)$ such that $\bar{\chi}_3(v)=\bar{\chi}_2(u)$, and $\bar{\chi}_3(w)=\bar{\chi}_2(w)$ for each vertex $w$ of $T_2$ other than $v$. Recalling that, by the definition of a rooted phylogenetic network, $v$ has exactly one child, it now follows that the contribution of the edges incident with $v$ in $T_2$ to $\ch(\bar{\chi}_2,T_2)$ is at least the contribution of those edges to $\ch(\bar{\chi}_3,T_2)$. 
In particular, by the optimality of $\bar{\chi}_2$, we have $\ch(\bar{\chi}_2,T_2)=\ch(\bar{\chi}_3,T_2)$. Setting $\bar{\chi}_2$ to be $\bar{\chi}_3$ and repeating this argument for each edge in $T_2$ that is a reticulation edge in $\cN$ and whose two endpoints are assigned to two different character states, it follows that we eventually obtain an extension $\bar{\chi}_2$ of $\chi$ to $V(T_2)$ that realizes $PS_{soft}(\chi,\cN)$ and has the property that, for each edge $(u,v)$ of $T_2$ that is a reticulation edge in $\cN$, we have $\bar{\chi}_2(u)=\bar{\chi}_2(v)$. Furthermore, as $T_2$ is a spanning tree of $\cN$, the extension $\bar{\chi}_2$ is also an extension of $\chi$ to $V(\cN)$. It is now easily checked that each reticulation edge in $\cN$ that is not an edge in $T_2$ either has two endpoints that are assigned to the same character state or is a negligible edge under $\bar{\chi}_2$, and so
\begin{eqnarray}\label{eq:two}
PS_{soft}(\chi,\cN)=\ch(\bar{\chi}_2,T_2)&=&\ch(\bar{\chi}_2,\cN)-{\rm n}(\bar{\chi}_2,\cN)\nonumber\\
&\geq & \min_{\bar\chi}(\ch(\bar\chi,\cN)-{\rm n}(\bar\chi,\cN)).
\end{eqnarray}
Combining the two inequalities (\ref{eq:one}) and (\ref{eq:two}) establishes the theorem.
\end{proof}

\noindent Intuitively, in the second equivalence, we do not `penalize' reticulation edges $(u, v)$ directed into a reticulation $v$ whose endpoints are assigned to different character states provided there is at least one reticulation edge $(p, v)$ whose endpoints are assigned to the same state.

\section {Connections between Maximum Parsimony and Maximum Likelihood on phylogenetic networks} \label{sec:MPML}

When considering MP on networks, a natural question is which properties of MP on trees still hold in the more general setting of phylogenetic networks. One well-known property of MP on trees is its equivalence with maximum likelihood (ML) \citep{TS} under the symmetric $r$-state model with `no common mechanism'. We will briefly introduce this model and the equivalence result here before we analyze its parallels to phylogenetic networks.

Before we state the results, we introduce some definitions. Recall that the symmetric $r$-state model, which is also often called {\it $N_{r}$-model} (and Jukes Cantor model for $r=4$ character states~\citep{JC}), is defined as follows. Let $\cN$ be a rooted phylogenetic network, and let $\{c_{1},c_2, \ldots, c_{r}\}$ be $r$ distinct character states with $r\geq 2$. The $N_{r}$-model assumes a uniform distribution of states at the root of $\cN$ and equal rates of substitutions between any two distinct character states \citep{N}. Under the $N_{r}$-model, we denote by $p(e)$ the probability that a substitution of a character state $c_{i}$ by another character state $c_{j}$ occurs on some edge $e \in E(\cN)$  for $c_{i} \neq c_{j}$. Furthermore, let $q(e)=1-(r-1)p(e)$ denote the probability that no substitution occurs on edge $e$. Then, in the $N_{r}$-model, we have $ 0 \leq p(e) \leq \frac{1}{r}$ for all $e \in E(\cN)$ and $(r-1)p(e)+q(e)=1$. Note that the $N_r$-model is time-reversible, i.e. it does not matter where the root of a network is placed, and the rate of change from state $c_i$ to $c_j$ is the same as that from $c_j$ to $c_i$. Lastly, we assume that, if a sequence consists of at least two characters, then the different characters have evolved under \textit{no common mechanism} \citep{TS}. This means that the substitution probabilities on the edges of the underlying network $\cN$ may be different for each character in the sequence without any correlation between them.

%So the $N_{r}$-model with \textit{no common mechanism} assumes that all characters evolve independently, but note that the  distributions of the characters do not necessarily have to be identical (so we do not assume the characters to be i.i.d.). Note that mathematicians often consider the $N_r$-model with no common mechanism as the simplest model of nucleotide evolution. This is due to the fact that the $N_r$-model itself only comes with one free parameter, namely the substitution rate, which in this model is identical for all kinds of substitutions. Moreover, the ncm assumption makes probability related calculations easy due to the implied independence. On the other hand, however, no common mechanism means that each site can choose its own model parameters. In this sense, from a biologist's point of view, the model is very complex rather than simple.

We will now turn our attention to likelihood concepts. Let $\cT$ be a rooted phylogenetic tree and let $\chi$ be a character on $X$. Recall that the probability $P(\chi | \cT,P^\cT)$ of $\chi$, for a given probability vector $P^\cT$ for character-state transitions on the edges of $\cT$, is the probability that a root state evolves along $\cT$ to the joint assignment
of leaf states induced by $\chi$. Furthermore, we have $P(\chi | \cT,P^\cT)=\sum_{\bar{\chi}}P(\bar{\chi} \mid \cT, P^\cT)$, i.e. the likelihood of $\chi$ on $\cT$ can be calculated as the sum of the likelihoods of all possible extensions of $\chi$ to $V(\cT)$ \citep{felsenstein1981}. The {\it ML} of $\chi$ on $\cT$, denoted by $\max \,P(\chi|\cT)$,  is the value of $P(\chi|\cT,P^\cT)$  maximized over all possible assignments of substitution probabilities $P^\cT$, i.e.  $\max \,P(\chi|\cT) = \max\limits_{P^\cT}P(\chi|\cT,P^\cT)$. Moreover, the trees for which $\max P(\chi|\cT)$ is maximum are called {\em ML trees}.\\ \par

We are now in a position to state the equivalence result of MP and ML for trees.

\begin{theorem}[Theorem 5 of \citet{TS}]
\label{T1}
Let $\cT$ be a phylogenetic tree on $X$, and let $S=(\chi_{1},\chi_2,\ldots, \chi_{k})$ be a sequence of $r$-state characters on $X$. Then, under the $N_{r}$-model with no common mechanism,
\begin{equation}\label{eq:TS}
\max P(S \mid \cT)= r^{-PS(S,\cT)-k}.
\end{equation}
Thus, {\rm ML} and {\rm MP} both choose the same tree(s).
\end{theorem}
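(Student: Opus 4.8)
The plan is to reduce the statement to a single character and then to a finite optimisation. Under no common mechanism the substitution probabilities for different characters are independent, so $P(S\mid\cT,\cdot)$ factorises as $\prod_{i=1}^{k}P(\chi_i\mid\cT,P_i^\cT)$ and the maximisation decouples, giving $\max P(S\mid\cT)=\prod_{i=1}^{k}\max P(\chi_i\mid\cT)$. Since $PS(S,\cT)=\sum_i PS(\chi_i,\cT)$, it suffices to prove the one-character identity $\max P(\chi\mid\cT)=r^{-PS(\chi,\cT)-1}$; multiplying over the $k$ characters then produces the exponent $-PS(S,\cT)-k$. The concluding claim that ML and MP select the same trees is then immediate, because for fixed $S$ the value $r^{-PS(S,\cT)-k}$ is a strictly decreasing function of $PS(S,\cT)$, so a tree maximises the likelihood exactly when it minimises the parsimony score. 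Write $\ell=PS(\chi,\cT)$ for the rest of the argument.

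For the one-character identity I would first establish the upper bound $\max P(\chi\mid\cT)\le r^{-\ell-1}$. The key observation is that, for a fixed extension $\bar\chi$, the probability $P(\bar\chi\mid\cT,P^\cT)=\tfrac1r\prod_{e\in E(\cT)}M_e$ is a product whose edge factor $M_e$ equals either $q(e)=1-(r-1)p(e)$ or $p(e)$, both affine in the single variable $p(e)$. Summing over all extensions, $P(\chi\mid\cT,P^\cT)$ is a multiaffine function of the vector $(p(e))_{e\in E(\cT)}$ on the box $\prod_e[0,\tfrac1r]$, and such a function attains its maximum at a vertex of the box (fix all but one coordinate; the restriction is affine, hence extremal at an endpoint; iterate). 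Hence it suffices to bound the likelihood when every $p(e)\in\{0,\tfrac1r\}$, that is, when each edge is either \emph{frozen} ($p(e)=0$, $q(e)=1$) or \emph{saturated} ($p(e)=q(e)=\tfrac1r$).

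At such a vertex the likelihood becomes purely combinatorial. If $f$ edges are saturated, a frozen edge forces its endpoints to share a state while a saturated edge contributes the constant factor $\tfrac1r$, so $P(\chi\mid\cT,P^\cT)=m\,r^{-f-1}$, where $m$ counts the extensions that are constant on each connected component of $\cT$ with its saturated edges deleted and agree with $\chi$ on $X$. Deleting $f$ edges from the tree leaves $f+1$ such components; $m=0$ (and the bound is trivial) if some component carries two leaves of different states, and otherwise $m=r^{L_0}$, where $L_0$ is the number of leaf-free components. The crux is the purely parsimony-theoretic inequality $\ell\le f-L_0$: contracting each component to a point gives a tree $Q$ on $f+1$ nodes and $f$ edges; rooting $Q$ at a component containing a leaf and colouring every leaf-free component with the state of its parent in $Q$ yields an extension of $\chi$ whose changing edges lie among the $f-L_0$ saturated edges that are not parent-edges of leaf-free components, so $\ell\le f-L_0$. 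This gives $m=r^{L_0}\le r^{f-\ell}$ and hence $P=m\,r^{-f-1}\le r^{-\ell-1}$.

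For the matching lower bound I would exhibit a single good vertex of the box. Take a most parsimonious extension $\bar\chi^\ast$ with exactly $\ell$ changing edges, saturate precisely those edges and freeze the rest; then $f=\ell$, the extension $\bar\chi^\ast$ is one of the $m\ge1$ valid extensions, and $P(\chi\mid\cT,P^\cT)=m\,r^{-\ell-1}\ge r^{-\ell-1}$. Combining the two bounds yields $\max P(\chi\mid\cT)=r^{-\ell-1}$, completing the reduction. I expect the main obstacle to be the combinatorial inequality $\ell\le f-L_0$: the multiaffinity argument cleanly removes the analytic optimisation, but one must verify with care—via the contraction-and-recolour argument above, and in particular by rooting $Q$ at a \emph{leafed} component so that every leaf-free component has a parent-edge—that the number of leaf-free components can never be large enough to push a corner value above $r^{-\ell-1}$.
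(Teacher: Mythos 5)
Your proof is correct: the reduction to one character via no common mechanism, the multiaffinity-in-$(p(e))$ argument forcing the maximum onto a corner of the box $[0,\tfrac1r]^{E(\cT)}$, the corner evaluation $P=m\,r^{-f-1}$ with $m=r^{L_0}$, and the combinatorial inequality $\ell\le f-L_0$ (which at the lower-bound corner $f=\ell$ forces $L_0=0$, so the two bounds meet at exactly $r^{-\ell-1}$) fit together without gaps, and the product over the $k$ characters gives the stated exponent $-PS(S,\cT)-k$. Note that the paper itself offers no proof---Theorem~\ref{T1} is imported verbatim from Tuffley and Steel (1997)---and your argument is essentially a faithful reconstruction of their original one, which likewise proceeds by linearity of the likelihood in each edge parameter, reduction to corners with $p(e)\in\{0,\tfrac1r\}$, and a parsimony-based count of the surviving extensions.
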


Note that Theorem \ref{T1} not only implies that both methods choose the same optimal sets of rooted phylogenetic trees, but rather that both methods induce the same ranking of trees. This means that whenever a phylogenetic tree $\cT$ on $X$ has a lower parsimony score than another such tree $\cT'$, then Equation (\ref{eq:TS}) implies that the likelihood of $\cT$ is higher than that of $\cT'$, which means that $\cT$ will both be more parsimonious and more likely than $\cT'$. We will later  use this fact to directly establish a similar equivalence result for the softwired parsimony setting on phylogenetic networks.

\subsection{Softwired Likelihood Functions}
We now turn to likelihood on phylogenetic networks. Let $\cN$ be a rooted phylogenetic network on $X$, and let $\chi$ be an $r$-state character on $X$. Let $P^\cN=(p(e_{i}): e_{i} \in E(\cN))$ denote the vector of the probabilities $p(e_i)$ of a character-state transition on the edges of $\cN$ under the $N_{r}$-model. Furthermore, let $\cT$ be a rooted phylogenetic tree on $X$ that is displayed by $\cN$, and let $\cE^\cT$ be an embedding of $\cT$ in $\cN$. Note that $\cE^\cT$ is not necessarily unique. We define a substitution probabilities vector $P^{\cE^\cT}=(p'(e'_{j}): e'_{j} \in E(\cT))$ assigned to the edges of $\cT$ as follows. For each edge $e'_j$ in $\cT$ that corresponds to a unique edge $e_j$ in $\cN$ (and $\cE^\cT$), we set $p'(e'_{j})=p(e_{j})$. Otherwise, $e'_j$ corresponds to a path of edges in $\cN$ (and $\cE^\cT$). If $e'_{j}$ corresponds to exactly two edges, say $e_i$ and $e_k$, in $\cN$,
%by contracting two edges $e_{i}$ and  $e_{k} \in E(\cN)$, 
we set 
\begin{equation*}
p'(e'_{j})= p(e_{i})+p(e_{k})-r\cdot p(e_i)p(e_k).
\end{equation*}
This definition considers the amount of change on both edges $e_i$ and $e_k$, which correspond to $e'_j$ as well as the $r$ possible situations where a change on $e_k$ undoes a change on $e_i$ so that there is no change occurring on $e_j'$.
%(e.g. $c_1 \rightarrow c_2 \rightarrow c_1$) 
This last part is subtracted.  Moreover, $p'(e'_{j})$ is equal to $0$ precisely when both values $p(e_i)$ and $p(e_k)$ are $0$, else it is positive. If $e'_{j}$ corresponds to a path of $l$ edges in $\cN$ with $l>2$, we iteratively apply the above equation $l-1$ times. We call $P^{\cE^\cT}$ a {\em restriction} of $P^\cN$ to  $\cT$ under the $N_{r}$-model. Furthermore, we denote  by $P^{\cT}$ a restriction of $P^\cN$ to  $\cT$ for which the probability of observing $\chi$ given $\cT$ and $P^\cT$ is maximized over all embeddings of $\cT$ in $\cN$, i.e. $$P(\chi\mid \cT, P^\cT)=\max_{P^{\cE^\cT}}P(\chi\mid \cT, P^{\cE^\cT}).$$

\begin{figure}[t]
\center
\scalebox{0.6}{ \includegraphics{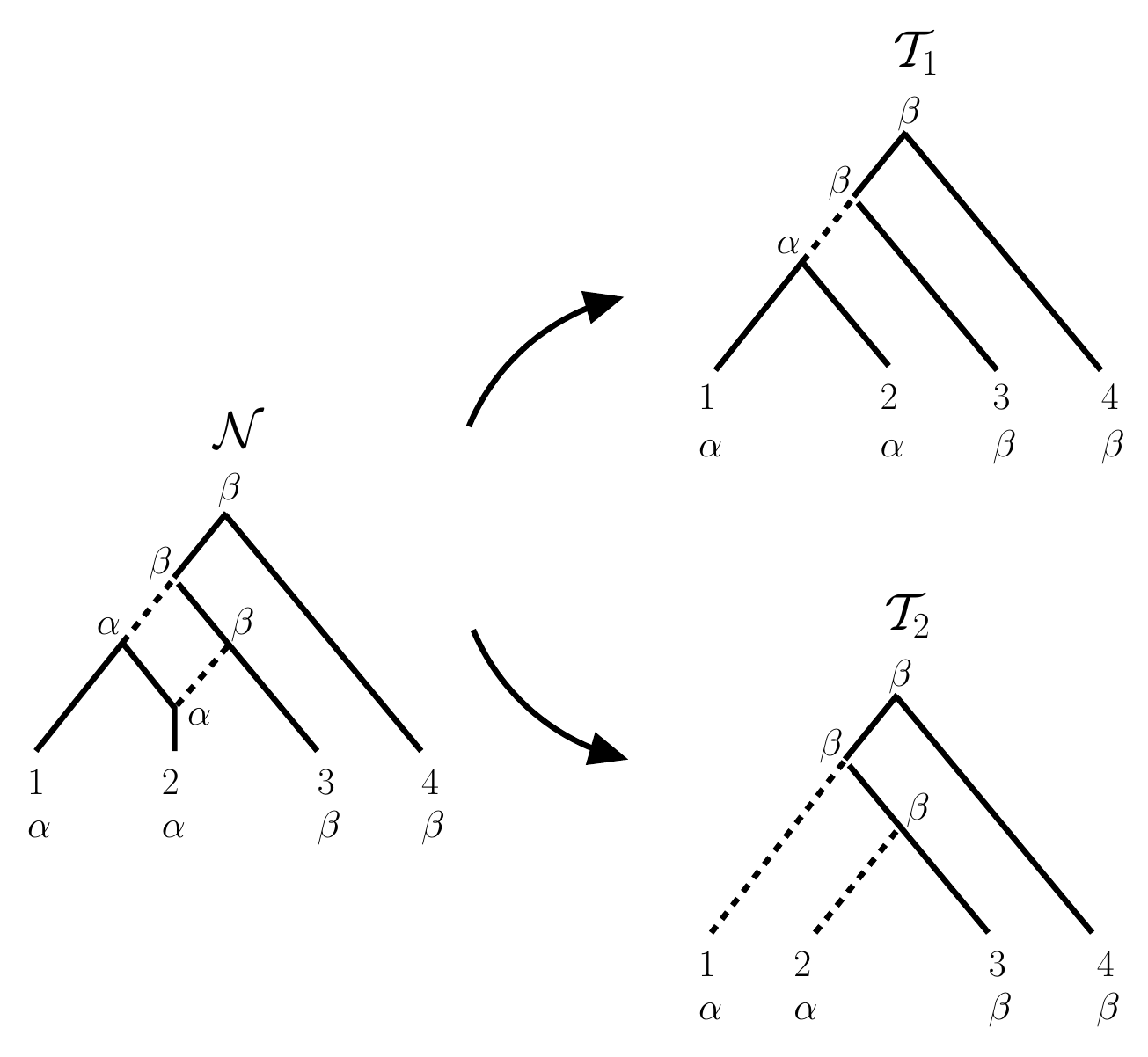} }
\caption{A rooted phylogenetic network $\cN$ on $X=\{1,2,3,4\}$ with one reticulation and the two phylogenetic trees $\cT_1$ and $\cT_2$ it displays. The character $\chi$ that is associated with the leaf labels in $\cN$ is also depicted together with a most parsimonious extension to the inner vertices. The dashed edges represent character-state transitions.}\label{fig:1}
\end{figure}

We next show that, for a frequently-used likelihood function, the equivalence of parsimony and likelihood on phylogenetic networks no longer holds. Let $\cN$ be a rooted phylogenetic network on $X$, let $\chi$ be an $r$-state character on $X$, and suppose that  $P^\cN$ 
is a vector of substitution probabilities on the edges of $\cN$ under the $N_{r}$-model with no common mechanism.  Furthermore, let $\cT$ be a rooted phylogenetic tree on $X$ that is displayed by $\cN$. We denote by $P(\cT \mid \cN,\chi)$  the probability that $\cT$ is chosen amongst all trees that are displayed by $\cN$. The above-mentioned likelihood function is the following, which can be found, for example, in \citet{NA}.
\begin{equation*}
 P_{w\textnormal{-}soft}(\chi \mid \cN,P^\cN)= \max\limits_{\cT \in \cD(\cN)}(P(\cT \mid \cN,\chi)\cdot P(\chi \mid \cT, P^\cT)).
\end{equation*}
We call this the {\it weighted softwired likelihood}, and the maximum of the weighted softwired likelihood over all probability assignments $P^\cN$ will be denoted by $\max  P_{w\textnormal{-}soft}(\chi \mid \cN)$. Biologically, it makes sense to distinguish between trees which are likely to be chosen and those which are not. However, softwired parsimony and weighted softwired likelihood are not equivalent on phylogenetic networks. More formally, we show, by means of a counterexample that consists of a single character, that $$\max P_{w\textnormal{-}soft}(\chi\mid\cN)= r^{-PS_{soft}(\chi,\cN)-k}$$ does not hold.

Consider the rooted phylogenetic network $\cN$ and the $2$-state character $\chi$ on the leaves of $\cN$ shown in Figure~\ref{fig:1}. In the same figure, the two rooted phylogenetic trees $\cT_1$ and $\cT_2$ on $X$ are precisely the trees displayed by $\cN$. Here, we have $PS(\chi,\cT_1)=1$ and $PS(\chi,\cT_2)=2$ and, hence $PS_{soft}(\chi,\cN)=1$. Moreover, assuming the $N_2$-model with no common mechanism, by Theorem \ref{T1}, we have
\begin{equation}\label{eq:T1}
\max P(\chi\mid\cT_1)=r^{-PS(\chi,\cT_1)-k}=2^{-1-1}=\frac{1}{4}
\end{equation}
and 
\begin{equation}\label{eq:T2}
\max P(\chi\mid\cT_2)=r^{-PS(\chi,\cT_2)-k}=2^{-2-1}=\frac{1}{8},
\end{equation}
where $k$ is the number of characters under consideration.
%and, so, the most likely tree that is displayed by  $\cN$ is $\cT_1$.
For the maximum of the weighted softwired likelihood, we have
\begin{align*}
\max P_{w\textnormal{-}soft}(\chi \mid \cN)&= \max\limits_{\cT\in \{\cT_1,\cT_2\}} (P(\cT\mid \cN,\chi)\cdot \max\limits P(\chi\mid\cT))\\
&= \max \left\{P(\cT_1\mid\cN,\chi) \cdot \frac{1}{4},P(\cT_2\mid\cN,\chi) \cdot \frac{1}{8} \right\},
\end{align*}
where the last equality follows from Equations~(\ref{eq:T1}) and~(\ref{eq:T2}).

Now, if we assume, for example, that $\cT_2$ is chosen three times as often as $\cT_1$, i.e. $P(\cT_1\mid\cN,\chi)=\frac{1}{4}$ and $P(\cT_2\mid\cN,\chi)=\frac{3}{4}$, then we have  $$\max P_{w\textnormal{-}soft}(\chi \mid \cN)= \max \left\{\frac{1}{4} \cdot \frac{1}{4},\frac{3}{4} \cdot \frac{1}{8} \right\} = \frac{3}{32}.$$
Not only is this weighted softwired ML value unequal to $r^{-PS_{soft}(\chi,\cN)-k}=\frac{1}{4}$, it is also achieved by tree $\cT_2$, whereas $\cT_1$ is strictly better than $\cT_2$ in the softwired parsimony sense. Consequently, under the weighted definition of softwired likelihood, the equivalence between parsimony and likelihood on networks fails. 
%We discuss the implications of this along with our other findings in the following section. 

Next, we consider a second softwired likelihood concept on phylogenetic networks which was introduced in \citet{barryhartigan} and analyzed by \citet{steelpenny}. We call this concept the {\em softwired pseudo-likelihood}. An explanation of why we call it pseudo-likelihood is given later. Let $\chi$ be a character on $X$. Given a rooted phylogenetic network $\cN$ on $X$ and a vector $P^\cN$ of substitution probabilities on the edges of $\cN$, we define
%Various definitions of likelihood for networks can be found in the literature, for example in \citet{NA}. 
%We will discuss one of these versions, which we will call the weighted softwired likelihood, later on. However, 
%First, we investigate a likelihood concept 
% Lastly, we set
%to be the maximum of all these values, which makes sense as we will consider the ML trees and networks in the following. Hence, we denote the probability of observing character $\chi$ given $\cT$ and $P^\cT$ by $P(\chi \mid \cT , P^\cT) $, and we set 
the {\it softwired pseudo-likelihood} $P_{soft}(\chi \mid \cN,P^\cN)$ of $\chi$ to be
$$P_{soft}(\chi \mid \cN,P^\cN)= \max_{\cT \in \cD(\cN)} P(\chi \mid \cT, P^\cT)=\max_{\cT \in \cD(\cN)} \sum_{\bar{\chi}}P(\bar{\chi} \mid \cT, P^\cT),$$ 
where the maximum is taken over all rooted phylogenetic trees $\cT$ on $X$ that are displayed by $\cN$ and the summation is taken over all extensions $\bar{\chi}$ of $\chi$ to $V(\cT)$. 

Now, the \textit{softwired pseudo-ML} of $\chi$ on $\cN$ as the maximum value of $P(\chi \mid \cT, P^\cT)$ of the most likely rooted phylogenetic tree on $X$ which is displayed by $\cN$. That is, the softwired pseudo-ML is defined by
\begin{equation*}
 \max P_{soft}(\chi \mid \cN)= \max_{\cT \in \cD(\cN)} \max_{P^\cT}P(\chi \mid \cT, P^\cT)=\max_{\cT \in \cD(\cN)} \max_{P^\cT}\sum_{\bar{\chi}}P(\bar{\chi} \mid \cT, P^\cT),
\end{equation*}
where, for a rooted phylogenetic tree $\cT$ displayed by $\cN$, the inner maximum is taken over all vectors of substitution probabilities on the edges of $\cT$ under the $N_{r}$-model.
% and $P(\chi \mid \cT, P^\cT)=\sum_{\bar{\chi}}P(\bar{\chi} \mid \cT, P^\cT)$, where the summation is taken over all extensions $\bar{\chi}$ of $\chi$ on $\cT$. 
A (not necessarily unique) \textit{softwired pseudo-ML network} of $\chi$ is a network for which the  softwired pseudo-ML is maximum, i.e. $$\arg\max_{\cN}[\max P_{soft}(\chi \mid \cN)].$$

Note that, by definition of the $N_{r}$-model with no common mechanism, the softwired pseudo-ML for a sequence of $r$-state characters $S=(\chi_{1},\chi_2,\ldots,\chi_{k})$ on $X$ can be calculated as the product of the pseudo-likelihoods of the individual characters due to independence. Hence, we have $$\max P_{soft}(S\mid \cN)=\prod_{i=1}^{k} \max P_{soft}(\chi_{i} \mid \cN).$$ 

It is worth noting that the above definition of a pseudo-likelihood on a phylogenetic network $\cN$ on $X$ does not incorporate a probability distribution on the trees that are displayed by $\cN$.
%as it can be found in the literature \citep{NA,Yu}. 
This is the reason, why we refer to it as pseudo-likelihood. In fact, if one sums up the softwired pseudo-likelihoods $P_{soft}(\chi \mid \cN,P^\cN)$ over all possible characters $\chi$ on $X$, the sum might be larger than 1. However, this pseudo-likelihood has been discussed before in a different context (e.g. in \citet{barryhartigan, steelpenny}), and it turns out to be strongly related to the softwired parsimony notion for phylogenetic networks. Specifically, using Theorem~\ref{T1} and the fact that not only the optimal trees are the same, but the entire ranking induced by parsimony and likelihood is identical, we next show that softwired MP and  softwired pseudo-ML on networks are equivalent.

%The equivalence of  softwired MP and  softwired pseudo-ML on networks is a direct consequence of the equivalence on trees, because Theorem~\ref{T1} does not only state that the optimal trees are the same, but the entire ranking induced by parsimony and likelihood is identical. So even if a rooted phylogenetic network $\cN$ does not display a globally optimal phylogenetic tree for either parsimony or likelihood, it will consider a (not necessarily unique) tree which is optimal under all trees that are displayed by $\cN$, and this tree is again the same for softwired MP and softwired pseudo-ML.

\begin{theorem}[Equivalence of softwired MP and softwired pseudo-ML for networks]\label{T2}
Let $\cN$ be a rooted phylogenetic network on $X$, and let $S=(\chi_{1},\chi_2,\ldots, \chi_{k})$ be a sequence of $r$-state characters on $X$. Then, under the $N_{r}$-model with \textit{no common mechanism},
\begin{equation*}
\max P_{soft}(S \mid \cN)=r^{-PS_{soft}(S,\cN)-k}.
\end{equation*}
Thus, softwired {\rm MP} and softwired pseudo-{\rm ML} both choose the same network(s).
\end{theorem}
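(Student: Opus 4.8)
The plan is to reduce to the single-character case and then apply the tree-level equivalence of Theorem~\ref{T1} to each tree displayed by $\cN$. First I would note that it suffices to prove the single-character identity
$$\max P_{soft}(\chi\mid\cN)=r^{-PS_{soft}(\chi,\cN)-1}.$$
Indeed, the no-common-mechanism product formula $\max P_{soft}(S\mid\cN)=\prod_{i=1}^{k}\max P_{soft}(\chi_i\mid\cN)$, established just before the theorem, together with the additivity $PS_{soft}(S,\cN)=\sum_{i=1}^{k}\min_{\cT\in\cD(\cN)}PS(\chi_i,\cT)$ coming directly from the definition of the softwired score, turns the single-character statement into the full one: taking the product over $i$ converts the sum of exponents into $-\sum_{i}PS_{soft}(\chi_i,\cN)-k=-PS_{soft}(S,\cN)-k$, exactly the claimed exponent.

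For the single-character identity I would unfold the definition of the softwired pseudo-ML, obtaining
$$\max P_{soft}(\chi\mid\cN)=\max_{\cT\in\cD(\cN)}\ \max_{P^\cT}P(\chi\mid\cT,P^\cT)=\max_{\cT\in\cD(\cN)}\max P(\chi\mid\cT),$$
since the inner maximum over all substitution-probability vectors $P^\cT$ on the edges of $\cT$ is precisely the ML value $\max P(\chi\mid\cT)$. I would then apply Theorem~\ref{T1} to each individual displayed tree $\cT$, treating $\chi$ as a one-character sequence (so $k=1$), which gives $\max P(\chi\mid\cT)=r^{-PS(\chi,\cT)-1}$ for every $\cT\in\cD(\cN)$.

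The key remaining step is to interchange the outer maximum over $\cD(\cN)$ with the minimization hidden inside the softwired parsimony score. Because $r\ge 2>1$, the map $t\mapsto r^{-t-1}$ is strictly decreasing, so a tree maximizing $r^{-PS(\chi,\cT)-1}$ over $\cD(\cN)$ is exactly a tree minimizing $PS(\chi,\cT)$. This is precisely the observation, recorded after Theorem~\ref{T1}, that likelihood and parsimony induce the same ranking of trees and not merely the same optimum. Hence
$$\max_{\cT\in\cD(\cN)}r^{-PS(\chi,\cT)-1}=r^{-\min_{\cT\in\cD(\cN)}PS(\chi,\cT)-1}=r^{-PS_{soft}(\chi,\cN)-1},$$
using the definition $PS_{soft}(\chi,\cN)=\min_{\cT\in\cD(\cN)}PS(\chi,\cT)$, which establishes the single-character case and, with the reduction above, the theorem.

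I do not expect a genuine obstacle: the argument is bookkeeping resting on the fact that Theorem~\ref{T1} is a per-tree equality rather than a mere statement about which tree is optimal, and on the softwired score decomposing both additively over characters and as a minimum over displayed trees. The one point that must be handled cleanly, rather than being a real difficulty, is the interchange in the final display; it is valid only because the exponential in Theorem~\ref{T1} is strictly monotone in the parsimony score, so the stronger ``same ranking'' reading of Theorem~\ref{T1} is exactly what the proof needs.
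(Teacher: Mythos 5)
Your proposal is correct and follows essentially the same route as the paper's own proof: reduce to a single character, rewrite the softwired pseudo-ML as $\max_{\cT\in\cD(\cN)}\max P(\chi\mid\cT)$, apply Theorem~\ref{T1} per displayed tree, and convert the outer maximum into $r^{-PS_{soft}(\chi,\cN)-1}$ before taking the product over characters under no common mechanism. The only difference is cosmetic: you make explicit the strict monotonicity of $t\mapsto r^{-t-1}$ justifying the max/min interchange, which the paper leaves implicit in its final equality.
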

\begin{proof} We first consider the case $k=1$, i.e. $S=(\chi_1)$. By Theorem~\ref{T1} and recalling the definition of the softwired parsimony score on $\cN$, we have
\begin{align} \max P_{soft}(\chi_1\mid \cN) &= \max\limits_{\cT\in \cD(\cN)} \max\limits_{P^\cT} P(\chi_1\mid \cT,P^\cT) \nonumber \\\nonumber
&=  \max\limits_{\cT\in \cD(\cN)} \max P(\chi_1\mid \cT) \\\nonumber
&=  \max\limits_{\cT\in \cD(\cN)} r^{-PS(\chi_1,\cT)-1}\\\nonumber
&=  r^{-PS_{soft}(\chi_1,\cN)-1}.
\end{align}

Now, for a sequence $S=(\chi_{1},\chi_2,\ldots, \chi_{k})$ of characters, we have
\begin{align*} 
\max P_{soft}(S\mid\cN)  &=  \prod\limits_{i=1}^{k} r^{-PS_{soft}(\chi_i,\cN)-1} \\
& =  r^{-\sum\limits_{i=1}^{k} (PS_{soft}(\chi_i,\cN)+1)}\\
&=  r^{- PS_{soft}(S,\cN)-k},
 \end{align*}
 where the first equality follows from the fact the characters are independent under the no common mechanism model and the third equality follows again from the definition of the softwired parsimony score on $\cN$.
This completes the proof.
 \end{proof}

%The difference between the softwired pseudo-likelihood and the weighted softwired likelihood is that in the latter, each tree likelihood is multiplied by the probability \blue{$P(\cT \mid \cN,\chi)$}. 

%The equivalence of  softwired MP and  softwired pseudo-ML on networks is a direct consequence of the equivalence on trees, because Theorem~\ref{T1} does not only state that the optimal trees are the same, but the entire ranking induced by parsimony and likelihood is identical. So even if a rooted phylogenetic network $\cN$ does not display a globally optimal phylogenetic tree for either parsimony or likelihood, it will consider a (not necessarily unique) tree which is optimal under all trees that are displayed by $\cN$, and this tree is again the same for softwired MP and softwired pseudo-ML.

\subsection{A Hardwired Likelihood Functions}

In this section, we analyze a hardwired notion of likelihood on networks.
%We  now turn to the hardwired case. 
%The likelihood concept on phylogenetic networks that we analyze in detail is referred to as {\em hardwired pseudo-likelihood}. 
Let $\cN$ be a rooted phylogenetic network on $X$, and let $\bar{\chi}$ be an extension of an $r$-state character $\chi$ on $X$ to $V(\cN)$. Furthermore, let $P^\cN=(p(e) : e \in E(\cN))$ be a substitution probabilities vector assigned to edges of $\cN$ under the $N_{r}$-model. 
We set the likelihood of $\bar{\chi}$ on $\cN$ given $P^\cN$ to be $$P(\bar{\chi} \mid \cN, P^\cN)= \frac{1}{r}\prod\limits_{\genfrac{}{}{0pt}{1}{e=(u,v):}{ \bar{\chi}(u)\neq\bar{\chi}(v)}}p(e)\prod\limits_{\genfrac{}{}{0pt}{1}{e=(u,v):}{ \bar{\chi}(u)=\bar{\chi}(v)}}q(e),$$ where the first product considers all edges in $E(\cN)$ whose two endpoints are assigned to two distinct character states and the second product considers all edges in $E(\cN)$ whose two endpoints are assigned to the same character state.
Then, the {\it hardwired pseudo-likelihood} of observing $\chi$ on $\cN$ for a given $P^\cN$ under the $N_{r}$-model is defined as
\begin{equation*}
P_{hard}(\chi \mid \cN,P^\cN)= \sum_{\bar{\chi}}P(\bar{\chi} \mid \cN, P^\cN),
\end{equation*}
where the summation is taken over all extensions $\bar{\chi}$ of $\chi$ to $V(\cN)$. 

Now, the \textit{hardwired pseudo-ML} of $\chi$ on $\cN$, denoted by $\max P_{hard}(\chi \mid \cN)$, is the maximum of $P_{hard}(\chi \mid \cN,P^\cN)$ over all $P^\cN$. Hence, $$\max P_{hard}(\chi \mid \cN)=\max_{P^\cN} P_{hard}(\chi \mid \cN,P^\cN).$$ 
Finally, a (not necessarily unique) \textit{hardwired pseudo-ML network} of $\chi$ is a network for which the hardwired pseudo-ML is maximum, i.e. $$\arg\max_{\cN}[\max P_{hard}(\chi \mid \cN)].$$ 
\\ \par

As for softwired, the hardwired maximum pseudo-likelihood score for a sequence of characters $S=(\chi_{1},\chi_2,\ldots,\chi_{k})$ on $X$ can be calculated as the product of the pseudo-likelihoods of the individual characters due to independence, i.e. $$\max P_{hard}(S\mid \cN)= \prod_{i=1}^{k} \max P_{hard}(\chi_{i} \mid \cN).$$

As for parsimony, for a rooted phylogenetic tree $\cT$ on $X$, we remark that the softwired and the hardwired definitions of ML on networks are equal and they also coincide with $\max P(S \mid \cT)$. Thus, we have $$\max P_{soft}(S \mid \cT)=\max P_{hard}(S \mid \cT)=\max P(S \mid \cT).$$ 

Moreover, we have the following equivalence result.

\begin{theorem}[Equivalence of hardwired MP and hardwired pseudo-ML for networks]\label{T3}
Let $\cN$ be a rooted phylogenetic network on $X$, and let $S=(\chi_{1},\chi_2,\ldots, \chi_{k})$ be a sequence of $r$-state characters on $X$. Then, under the $N_{r}$-model with \textit{no common mechanism},
\begin{equation*}
\max P_{hard}(S \mid \cN)=r^{-PS_{hard}(S,\cN)-k}.
\end{equation*}
Thus, hardwired {\rm MP} and hardwired pseudo-{\rm ML} both choose the same network(s).
\end{theorem}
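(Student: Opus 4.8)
The plan is to follow the template of the proofs of Theorems~\ref{T1} and~\ref{T2}: first reduce to a single character, and then establish a Tuffley--Steel-type identity, but now for the acyclic digraph $\cN$ in place of a tree. By the no-common-mechanism assumption the characters are independent, so $\max P_{hard}(S\mid\cN)=\prod_{i=1}^k\max P_{hard}(\chi_i\mid\cN)$ and $PS_{hard}(S,\cN)=\sum_{i=1}^k PS_{hard}(\chi_i,\cN)$. Hence it suffices to prove, for a single character $\chi$, the identity
\begin{equation}\label{eq:single}
\max P_{hard}(\chi\mid\cN)=r^{-PS_{hard}(\chi,\cN)-1},
\end{equation}
since taking the product over $i$ and collecting the $k$ factors $r^{-1}$ into $r^{-k}$ then yields the theorem exactly as in the proof of Theorem~\ref{T2}. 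The observation driving~\eqref{eq:single} is that $P_{hard}(\chi\mid\cN,P^\cN)=\tfrac1r\sum_{\bar\chi}\prod_{e}g_e(\bar\chi)$, where $g_e(\bar\chi)=p(e)$ if the endpoints of $e$ receive distinct states and $g_e(\bar\chi)=q(e)$ otherwise, is \emph{formally identical} to the expression for ordinary likelihood on a tree; the only difference is that $\cN$ carries a richer edge set. Thus~\eqref{eq:single} says precisely that the Tuffley--Steel equivalence survives when the tree is replaced by the acyclic digraph $\cN$, and I would prove it directly via two matching inequalities.

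Write $m=PS_{hard}(\chi,\cN)$ and let $\bar\chi^\ast$ be an extension of $\chi$ attaining $\ch(\bar\chi^\ast,\cN)=m$. For the lower bound I would exhibit an explicit $P^\cN$: set $p(e)=\tfrac1r$ (so $q(e)=\tfrac1r$) on the $m$ edges that change under $\bar\chi^\ast$, and $p(e)=0$ (so $q(e)=1$) on every other edge. The single summand coming from $\bar\chi^\ast$ already contributes $\tfrac1r\cdot(\tfrac1r)^m=r^{-m-1}$, and all summands are non-negative, so $\max P_{hard}(\chi\mid\cN)\ge r^{-m-1}$.

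The main work, and the genuine obstacle, is the matching upper bound $P_{hard}(\chi\mid\cN,P^\cN)\le r^{-m-1}$ for \emph{every} admissible $P^\cN$ (i.e. with each $p(e)\in[0,\tfrac1r]$). Here I would first note that, for each fixed edge $e$, every summand $\prod_{e'}g_{e'}(\bar\chi)$ is affine in $p(e)$ (since $g_e$ is either $p(e)$ or $q(e)=1-(r-1)p(e)$ and no other factor involves $e$); hence $P_{hard}(\chi\mid\cN,P^\cN)$ is multi-affine in the variables $(p(e))_{e\in E(\cN)}$, so its maximum over the box $[0,\tfrac1r]^{E(\cN)}$ is attained at a corner where every $p(e)\in\{0,\tfrac1r\}$. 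At such a corner the edges split into a set $F$ of \emph{free} edges ($p(e)=\tfrac1r$, for which $g_e\equiv\tfrac1r$) and a set $R$ of \emph{rigid} edges ($p(e)=0$, which force equal endpoint states). If two leaves with distinct states lie in the same connected component of the rigid subgraph $(V(\cN),R)$, the sum is empty and the value is $0$; otherwise a direct count gives $P_{hard}(\chi\mid\cN,P^\cN)=r^{-1-|F|+c_0}$, where $c_0$ is the number of components of $(V(\cN),R)$ containing no leaf. It therefore remains to prove the purely combinatorial inequality $|F|-c_0\ge m$.

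I expect this inequality to be the crux. The plan is to contract the rigid edges: an $R$-consistent extension is exactly a colouring of the quotient (multi)graph $Q$ whose vertices are the $R$-components and whose edges are the $|F|$ free edges, each leaf-containing component being pre-coloured by its leaf state and each of the $c_0$ leaf-free components being free. The number of bichromatic free edges of any such colouring is at least $m$ (it is the changing number of a genuine extension), so it suffices to produce one colouring with at most $|F|-c_0$ bichromatic edges. To do this I would take a spanning tree of the connected graph $Q$ rooted at a leaf-containing component and colour each free component equal to its parent, processing from the root downwards; each of the $c_0$ free components then makes its parent edge monochromatic, and these $c_0$ edges are distinct, leaving at most $|F|-c_0$ bichromatic edges. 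Hence $m\le|F|-c_0$, so $P_{hard}(\chi\mid\cN,P^\cN)\le r^{-m-1}$ at every corner and thus for all $P^\cN$. Combining this with the lower bound gives~\eqref{eq:single}, and the product over the $k$ characters completes the proof.
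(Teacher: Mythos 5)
Your proposal is correct, and it does more than the paper does: the paper explicitly \emph{omits} the proof of Theorem~\ref{T3}, saying only that it is ``a rather technical generalization of the proof of Theorem~\ref{T1} in \citet{TS}'' which exploits the fact that leaves in different states are separated by at least one changed edge on every connecting path. What you have written is precisely that generalization, carried out in full. Your reduction to a single character via no common mechanism is the same step used in Theorem~\ref{T2}; your lower bound (put $p(e)=\tfrac1r$ on the $m$ changed edges of an optimal extension, $p(e)=0$ elsewhere, and keep the single summand $r^{-m-1}$) and your multi-affinity argument forcing the maximum onto a corner of $[0,\tfrac1r]^{E(\cN)}$ are exactly the Tuffley--Steel template; the genuinely network-specific content is your corner evaluation $r^{-1-|F|+c_0}$ and the combinatorial inequality $|F|-c_0\ge m$, which you prove by contracting the rigid edges and colouring the quotient multigraph top-down along a spanning tree rooted at a leaf-containing component. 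I checked the points where this could go wrong and they all hold: the quotient $Q$ is connected because $\cN$ is; loops and parallel free edges are harmless (loops are automatically monochromatic and excluded from the spanning tree, parallel edges are counted separately on both sides of the inequality); corners at which some rigid component contains leaves in conflicting states give value $0$ and satisfy the bound trivially; the $c_0$ parent edges you make monochromatic are pairwise distinct since each non-root vertex of the spanning tree has a unique parent edge; and the directedness of $\cN$ is irrelevant since the pseudo-likelihood and the changing number are both symmetric in edge orientation. In effect your spanning-tree recolouring replaces (and subsumes) the path-separation fact the paper gestures at, and your writeup supplies the details the published version leaves out.
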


The proof of Theorem \ref{T3} is a rather technical generalization of the proof of Theorem \ref{T1} in \citet{TS}. In particular, the proof exploits the fact that, if two leaves of a rooted phylogenetic network $\cN$ are in a different character state, then all paths in $\cN$ that connect the two leaves contain at least one edge whose two endpoints are assigned to two different states. We omit the details of this proof.

We end this section with a remark. Recall that the likelihoods of all characters on an arbitrary phylogenetic tree sum up to one. Since a  phylogenetic network $\cN$ can be obtained from some phylogenetic tree $\cT$ by adding edges, it follows that the the  likelihoods of all characters on $\cN$ may sum up to a value that is strictly less than one because each likelihood on $\cT$ will be multiplied with the substitution probability on each additional edge in $\cN$. This is the reason, why we refer to $P_{hard}(\chi \mid \cN,P^\cN)$ as  pseudo-likelihood.

\section{Conclusion}\label{sec:conclu}
The small parsimony problem on networks has recently attracted considerable attention. In particular, several related complexity questions have been settled \citep{fischer,jin09,nguyen07} and exact algorithms and heuristics \citep{fischer, kannan12, kannan14} to tackle this problem have been proposed. In contrast, the big parsimony problem has so far only been mentioned in one article \citep{wheeler15}, where formal proofs were omitted. Yet, the big problem is exactly what is ultimately of interest to evolutionary biologists who wish to reconstruct a rooted phylogenetic network from molecular data under a parsimony framework. 
%We remark that a small number of papers (e.g. \citet{jin06} and \citet{jin07})  have studied a fundamentally different variation of softwired MP on networks that is based on a given species tree. In particular, this variation takes as input a sequence of characters, a rooted phylogenetic tree $\cT$ (e.g. a species tree for the species under consideration), and a positive integer $k$, and calculates a phylogenetic network from $\cT$ by adding $k$ edges such that the softwired parsimony score of the resulting networks is minimized over all possible solutions. 

In this paper, we have presented the first formal analysis of MP networks and uncovered several curious properties of such networks, including interesting parallels to functions that resemble likelihood functions. Depending on whether one reconstructs an MP network under the hardwired or softwired framework, it is potentially either overly simple (under hardwired) or overly complex (under softwired) in terms of the number of reticulations. Consequently, under both notions, the biological relevance of MP networks is challenged. In particular, the results in this paper show that neither hardwired nor softwired MP can distinguish between evolutionary histories that are best represented by a phylogenetic tree and histories that are best represented by a network.
%SL. previous sentence inspired by Referee 3 of Mareike's report.
%both types of networks challenge their biological relevance for many data sets. 
It suggests that we need to reconsider the definition of parsimony on networks and to develop a new or improved framework. 

One such improvement, that we propose, is to consider an extension of the softwired parsimony definition that computes the parsimony score of a sequence $S$ of characters on a rooted phylogenetic network $\cN$ by first computing $PS_{soft}(S,\cN)$ and then  increasing this score by a certain user-defined `penalty' for each reticulation in $\cN$. Unless the penalty is set to zero, an MP network for $S$ under this new definition is unlikely to have a high number of reticulations and, similarly, unless the penalty is set to infinity,  such a network for $S$ is unlikely to be a tree. Since evolutionary biologists often have valuable information at their fingertips as to whether the expected amount of reticulation is significant or not for a certain data set, this information can be used to compute parsimonious networks that are biologically more meaningful than those reconstructed under the hardwired or softwired definition. In particular, if a high amount of reticulation is expected (e.g. as for certain groups of bacteria or plants) the penalty should be smaller than in the case for when one expects the evolutionary history to be almost tree-like. 

Concerning the parallels of both the softwired and hardwired parsimony concepts to likelihood concepts on phylogenetic networks, we showed in the previous section that the equivalence fails as soon as a more meaningful likelihood concept, which assigns probabilities to all trees that are displayed by a network, is applied. However, it is easily seen that, if all such trees have the same probability, the rankings suggested by softwired parsimony and weighted softwired likelihood are identical and, thus, softwired MP and weighted softwired ML choose the same optimal networks. On the other hand, if one wanted to employ a (more biologically plausible) non-uniform distribution on the trees displayed by a phylogenetic network, we conjecture that softwired parsimony and weighted softwired likelihood are equivalent if one changes the definition of softwired parsimony in a way that assigns a suitable scaling factor to each displayed tree. For future research, it will be interesting to prove this conjecture and to analyze how other definitions of likelihood on networks relate to parsimony. \\

\noindent {\bf Acknowledgements.}  We wish to thank Steven Kelk for helpful discussions on the topic. The first, third, and fourth author thank the New Zealand Marsden Fund for their financial support. Part of this work was done while the first author was a student at the University of Canterbury, New Zealand.

%% If you have bibdatabase file and want bibtex to generate the
%% bibitems, please use
%%
%\bibliographystyle{elsarticle-harv} 
%%  \bibliography{<your bibdatabase>}

%% else use the following coding to input the bibitems directly in the
%% TeX file.

\end{document}